

\documentclass[smallcondensed]{svjour3}


\usepackage[utf8]{inputenc} 

\usepackage{geometry} 
\geometry{letterpaper} 

\usepackage{graphicx} 
\usepackage{subfig} 
\usepackage{placeins} 
\usepackage{booktabs} 
\usepackage{multirow} 
\usepackage{xcolor}

\usepackage{paralist} 
\usepackage{verbatim} 

\usepackage{array} 
\usepackage{mathtools} 
\usepackage{amssymb} 

\usepackage{algorithmic} 
\usepackage[linesnumbered,noend]{algorithm2e}

\usepackage[hidelinks]{hyperref} 

\graphicspath{{./}}



\title{Using Subobservers to Synthesize Opacity-Enforcing Supervisors}
\author{Richard~Hugh~Moulton\thanks{RHM is the corresponding author and can be reached at richard.moulton@queensu.ca} \and Behnam~Behinaein~Hamgini \and Zahra~Abedi~Khouzani \and R{\^{o}}mulo~Meira-G{\'{o}}es \and Fei~Wang \and Karen~Rudie}
\institute{Richard Hugh Moulton \and Behnam Behinaein Hamgini \and Zahra Abedi Khouzani \at Department of Electrical and Computer Engineering, Queen's University, Kingston, Canada
\and
R{\^{o}}mulo Meira-G{\'{o}}es \at Department of Electrical Engineering and Computer Science, University of Michigan, Ann Arbor, USA
\and
Fei Wang \at College of Information Science and Engineering, Huaqiao University, Xiamen, China
\and
Karen Rudie \at Department of Electrical and Computer Engineering, Queen's University, Kingston, Canada\\Ingenuity Labs Research Institute, Queen's University, Kingston, Canada
}


\begin{document}
\maketitle

\begin{abstract}
In discrete-event system control, the worst-case time complexity for computing a system's observer is exponential in the number of that system's states. This results in practical difficulties since some problems require calculating multiple observers for a changing system, e.g.,\ synthesizing an opacity-enforcing supervisor. Although calculating these observers in an iterative manner allows us to synthesize an opacity-enforcing supervisor and although methods have been proposed to reduce the computational demands, room exists for a practical and intuitive solution. Here we extend the subautomaton relationship to the notion of a subobserver and demonstrate its use in reducing the computations required for iterated observer calculations. We then demonstrate the subobserver relationship's power by simplifying state-of-the-art synthesis approaches for opacity-enforcing supervisors under realistic assumptions.
\end{abstract}

\keywords{discrete-event systems \and supervisory control \and opacity}

\section{Introduction}
Discrete-event systems (DES) research has recently focused on opacity, a system property that ensures that secret information cannot be distinguished from non-secret information. The literature includes methods for verifying if a system is opaque, namely, whether a hostile agent could—based on its observations—determine whether or not the system is in a secret state, or whether or not the system has generated a secret event sequence. A natural question to ask if a system is not opaque is “how can we make this system opaque?” One approach is to alter which events are visible to an adversary, however this is rarely in the power of the system designer. An alternative approach is to use supervisory control to disable event occurrences so that the controlled system is opaque~\cite{Jacob2016}.

Researchers have examined opacity-enforcing supervisors under various assumptions~\cite{Badouel2007,Ben-Kalefa2011,Dubreil2010,Takai2008,Tong2018a,Yin2015,Yin2016a}. The na{\"{i}}ve method of producing such a supervisor starts by constructing a plant automaton representing the system and an observer automaton capturing the adversary's view of the system based on which events the adversary can observe. The standard method for constructing an observer automaton relies on converting a nondeterministic finite automaton to a deterministic finite automaton (NFA-to-DFA conversion)~\cite{Hopcroft1979}, which is known to require asymptotic computing time that is exponential in the number of states of the plant. Although supervisory control can then restrict plant behaviour to enforce opacity, a control-aware adversary will then be able to revise its plant estimates. As a result, repeated plant evolution necessitates repeated NFA-to-DFA operations. In this paper we present a method for synthesizing an opacity-enforcing supervisor that relies on only one NFA-to-DFA operation on the states of the uncontrolled plant.

Although we motivate our approach with the problem of plant evolution due to supervisory control, it is more generally a method for tracking the joint behaviour of a plant and an observer through multiple steps of plant evolution, without computing an observer at each intermediary stage. In addition supervisory control, a plant may evolve over time for other reasons as well. For example, a discrete-event process may be a naturally time-varying dynamic discrete-event system~\cite{Grigorov2006}, may require online control~\cite{Chung1992}, or may be best modelled by a time-varying automaton because more becomes known about the process over time. Alternatively, a discrete-event process may be controlled by a number of decentralized agents with different views of the plant,~\cite{Lin1988a}, where controls enacted by one agent may allow another agent to update its own estimates of the plant.

The power of our approach is that it leads to efficient and intuitive computations. In the remainder of this paper we review previous approaches to the opacity control problem, we introduce the subobserver property and show that it can be used to incrementally refine the joint behaviour of a plant and an observer, and, finally, we demonstrate how this approach can be used to efficiently synthesize an opacity-enforcing supervisor.

\section{Problem Definition and Related Works}
To lay the theoretical foundation for our contribution, we introduce the formalisms of DES, define the opacity control problem, and review the literature for related concepts and methods.

\subsection{Discrete-event systems}
DES are used to model processes that are discrete, asynchronous and potentially nondeterministic~\cite{Ramadge1987}. The discrete nature of these systems is captured with an automaton: at any given time the system is in a specific state and the system moves between states via event-based transitions.

Mathematically, a DFA $G$ is a 5-tuple $(Q, \Sigma, \delta_G, q_0, Q_m)$ where $Q$ is the finite set of states in the system, $\Sigma$ is the finite alphabet of events that can occur, $\delta_G: Q \times \Sigma \to Q$ is a partial function, $q_0 \in Q$ is the initial state, and $Q_m \subseteq Q$ is the set of marked states. From a control theory perspective the automaton $G$ represents the plant, or underlying process, to be controlled. The alphabet $\Sigma$ represents the events that can occur within the plant and can be divided into disjoint sets by considering whether individual events are controllable or uncontrollable ($\Sigma_c$ and $\Sigma_{uc}$ respectively) and whether they are observable ($\Sigma_o$ and $\Sigma_{uo}$ respectively). The transition function $\delta_G$ leads inductively from single events to the language $L(G)$, which is the plant's full range of behaviour. Because $\delta_G$ is usually a partial function, we use the notation $\delta_G(q,\sigma)!$ to denote that a specific transition $\delta_G(q,\sigma)$ is defined. We will also use the notation $\langle q', \sigma, q \rangle_G$ to denote that $\delta_G(q',\sigma) = q$. The set of marked states may represent the completion of a process or, in our case, a set of distinctive states that we wish to keep indistinguishable to an adversary. In this paper we call these ``secret states.''

A particular DFA representation of a language induces a binary relation called the equiresponse relation (Definition~\ref{def:equiresponseRelation}, from Lin and Wonham~\cite{Lin1988}) on $\Sigma^*$, the Kleene closure of $\Sigma$. The equiresponse relation for a DFA associates strings with the states they lead to, so that two strings are related if they lead to the same state. Because the equiresponse relation is a right congruence on $\Sigma^*$, it is also an equivalence relation and, like all equivalence relations, defines a partition. We note that different DFA representations of a language will induce different equiresponse relations and there is no canonical equiresponse relation.
\begin{definition}[Equiresponse relation] \label{def:equiresponseRelation}
Given an automaton $G = (Q, \Sigma, \delta_G, q_0, Q_m)$, the \textbf{equiresponse relation} of $G$ is a right congruence $eq(G)$ on $\Sigma^*$ and defined by $$s \equiv s' \mod eq(G) \iff \delta_G(q_0,s) = \delta_G(q_0, s')$$ and if $\delta_G(q_0,s)$ is undefined then $\delta_G(q_0,s) = \delta_G(q_0,s')$ if and only if $\delta_G(q_0,s')$ is undefined as well.
\end{definition}

\subsubsection{Observing discrete-event systems} \label{sec:observingDES}
When an automaton contains unobservable events, an observer automaton can be used to capture an agent's beliefs about the system. This observer automaton can be computed by transforming the original automaton, first by substituting the empty string $\epsilon$ for all transitions whose events are not in $\Sigma_o$~\cite[p.~76]{Cassandras2008}, and then performing NFA-to-DFA conversion~\cite[pp.~87-90]{Cassandras2008}. Note that we do not use the standard marking associated with NFA-to-DFA conversion (whereby a subset labelling a state in the DFA is marked if any element in the subset is marked in the original NFA); rather, we mark a state in the DFA only if {\em all\/} elements of the subset label are marked states in the original NFA. As will be discussed in Section~\ref{sec:proposedAlgorithm}, this way of marking states in the observer automaton will distinguish states in which an adversary can be sure that the original system is in a secret state.

Throughout this paper we denote the transformation of an automaton to an observer automaton with respect to the alphabet $\Sigma_o$ by $T_{\Sigma_o}(\cdot)$; we denote the associated change in languages generated by these automata as the projection $P_{\Sigma_o}(\cdot)$ (Definition~\ref{def:projectionString}, from Cassandras and Lafortune~\cite[p.~57]{Cassandras2008}). 
\begin{definition}[Projection of strings] \label{def:projectionString}
For an alphabet $\Sigma$ and another alphabet $\Sigma_o \subseteq \Sigma$, we define the \textbf{projection} $P_{\Sigma_o}: \Sigma^* \rightarrow \Sigma_o^*$ recursively as follows:
\begin{align*}
P_{\Sigma_o}(\epsilon) &= \epsilon\\
P_{\Sigma_o}(\sigma) &= \begin{cases}\sigma\quad&\text{if } \sigma \in \Sigma_o\\ \epsilon\quad&\text{if } \sigma \in \Sigma \setminus \Sigma_o\end{cases}\\
P_{\Sigma_o}(s\sigma) &= P_{\Sigma_o}(s)P_{\Sigma_o}(\sigma)\quad \text{for }s \in \Sigma^*, \sigma \in \Sigma
\end{align*}
\end{definition}

The projection operation implicitly defines an equivalence relation over $\Sigma^*$, where two strings $s$ and $s'$ are related if they share a projection, $P_{\Sigma_o}(s) = P_{\Sigma_o}(s')$. Each cell in the associated partition contains those strings whose projections are equal. Different projections will define different partitions on $\Sigma^*$ and we say that one partition refines another partition if it continues to distinguish between all elements that the latter partition distinguishes between (Definition~\ref{def:refinementOfPartition})~\cite[Ch.~9]{Rosen2019}.
\begin{definition}[Refinement of a partition] \label{def:refinementOfPartition}
Let $\rho_1$ and $\rho_2$ be binary relations on a set $X$. We say that $\rho_1$ \textbf{refines} $\rho_2$, or that $\rho_2$ is coarser than $\rho_1$, denoted $\rho_1 \leq \rho_2$ if $$(\forall x,y \in X)\ (x,y) \in \rho_1 \implies (x,y) \in \rho_2.$$
\end{definition}

\subsubsection{Controlling discrete-event systems}
In addition to observing DES processes, we can also control them. In DES the controller that modifies which events can occur in a plant $G$ is called the \emph{supervisor} and is formally a pair $\mathcal{S} = (S,\phi)$. Here, $S = (X,\Sigma,\xi,x_0,X_m)$ is a DFA that shares an alphabet with $G$ but has a different state space, transition function and set of marked states~\cite{Ramadge1987}. The second component of the pair, $\phi: X \to \Gamma$, is a state feedback map that maps supervisor states, $x \in X$, to control patterns $\gamma \in \Gamma$,
$$\gamma \coloneqq \phi(x) \in \{0,1\}^\Sigma.$$
These control patterns indicate whether the supervisor will enable an event, denoted by $1$, or disable it, denoted by $0$, with the requirement that no uncontrollable event can be disabled. The controlled DES process is then represented by the supervisor composed with the plant and is denoted by $$\mathcal{S}/G = Ac(X \times Q, \Sigma, \xi \times \delta, (x_0, q_0), X \times Q_m)$$ where $Ac$ is a function that restricts an automaton to its accessible portion: those states that can be reached from the initial state via a string in the automaton's language~\cite{Ramadge1987}. A control pattern can also be defined using the supervisor's transition function, where an event $\sigma$ is enabled in $\phi(x)$ if $\xi(x,\sigma)!$ and it is disabled otherwise. A great amount of work exists in the literature for determining whether a system is controllable, whether a system is observable, and how to synthesize supervisors~\cite{Cassandras2008,Wonham2019}. 

\subsection{The notion of opacity}
When we talk about the opacity property of a system, we use the plain language meaning of an opaque system. That is, we say that a system is opaque if an observer of the system is unable to unambiguously determine some characteristic of that system. Mazar{\'{e}} formalized this notion of opacity for computer science and showed that under this definition it is decidable whether or not a particular system property is opaque~\cite{Mazare2004}. Formulating this notion of opacity for transition systems, Bryans et al.\ used epistemic logic and possible world models to frame opacity as the inability of an observer to determine the truth of a particular predicate related to the system~\cite{Bryans2008}.

Building on this work, Saboori and Hadjicostis characterized a number of state-based opacities for DES including initial state, $k$-step, and infinite step opacities~\cite{Saboori2008,Saboori2011,Saboori2012}. Alternatively, Lin considered opacity to be a broader concept than Bryans et al. and defined opacity in terms of languages, distinguishing between strong and weak opacity in terms of the degree to which strings in one language are confused by an observer with strings in another language~\cite{Lin2011}.

More recently, Wu and Lafortune unified four common kinds of opacity: language-based, initial-state, current-state, and initial-and-final-state. They considered a problem setup where adversaries had full knowledge of the system's structure but only partial observations of its behaviour. Their key result was a set of polynomial-time algorithms for transforming each of these four kinds of opacity into the other under the authors' problem formulation. The one exception to this is that language-based opacity cannot be transformed into initial-state opacity if either the secret or non-secret language is not prefix-closed~\cite{Wu2013}. Without loss of generality, therefore, we consider in this paper the problem of enforcing current-state opacity, which we define in line with Wu and Lafortune~\cite{Wu2013}.
\begin{definition}[Current-state opacity] \label{def:currentStateOpacity}
Given a plant $G = (Q,\Sigma,\delta,q_0,Q_m)$, a projection $P_{\Sigma_o}$, and a disjoint sets of secret states $Q_S \subseteq Q$ and non-secret states $Q_{NS} \subset Q$, $Q_S \cap Q_{NS} = \emptyset$, $G$ is \emph{current-state opaque} if for every string that leads from an initial state to a secret state, there exists a string that leads from an initial state to a non-secret state and these two strings have identical projections.
\begin{align*}
&\forall\ s \in L(G) \text{ such that } \delta(q_0,s) \in Q_S,\\
&\exists\ t \in L(G) \text { such that } \delta(q_0,t) \in Q_{NS} \text{ and } P_{\Sigma_o}(s) = P_{\Sigma_o}(t)
\end{align*}
\end{definition}

\subsection{Enforcing opacity through supervisory control} \label{sec:opacitycontrolproblem}
It is natural to want to ensure that a given system satisfies our well-defined notion of opacity. This problem has been addressed in the literature in many different forms, in this paper we name it the \emph{opacity control problem}. This problem is defined by a set of general parameters~\cite{Bryans2008,Dubreil2010}:
\begin{enumerate}
\item the set of events that are observable and controllable for the supervisor ($\Sigma_s$ and $\Sigma_c$ respectively);
\item the set of events that are observable by the adversary ($\Sigma_a$);
\item the secret and type of opacity to enforce; and
\item the adversary's knowledge of any supervisory control policy.
\end{enumerate}

Since the supervisor and adversary each have a set of observable events, we label these $\Sigma_s$ and $\Sigma_a$ respectively to avoid ambiguity. The most general formulation of the opacity control problem assumes no relationship between $\Sigma_s$, $\Sigma_c$, and $\Sigma_a$. Other reasonable relationships to assume between these three sets is that the adversary sees a subset of the events that the supervisor does, $\Sigma_a \subseteq \Sigma_s$, that the supervisor can only control events that it can see, $\Sigma_c \subseteq \Sigma_s$, and that the system contains only events that are controllable and observable by the supervisor, $\Sigma_c = \Sigma_s = \Sigma$. Regardless of the specific relationship between event sets, the most principled way to approach the opacity control problem is to consider an adversary that is aware of the supervisor's control policy and thereby avoid any appeals to security through obscurity. In this paper we use the formulation given in Problem~\ref{prob:opacitycontrolproblem}.
\begin{problem}[Opacity Control Problem] \label{prob:opacitycontrolproblem}
Given a DFA $G = (Q,\Sigma,\delta,q_0,Q_m)$ where $Q_m$ is the set of secret states, an alphabet of events the supervisor can observe $\Sigma_s \subseteq \Sigma$ and an alphabet of events the supervisor can control $\Sigma_c \subseteq \Sigma_s$. Given an supervisor-aware adversary and an alphabet of events that it can observe $\Sigma_a \subseteq \Sigma_s$. Find a supervisor $\mathcal{S}$ such that the closed-loop behaviour of the plant composed with the supervisor, $\mathcal{S}/G$, is current-state opaque with respect to the alphabet observable by the adversary, $\Sigma_a$, the set of secret states $Q_S = Q_m$, and the set of non-secret states $Q_{NS} = Q \setminus Q_m$.
\end{problem}

The na{\"{i}}ve solution for this formulation is to iteratively calculate the adversary's view of the plant, apply supervisory control, and then check the opacity of the controlled system until a fixed point is reached. This approach has been noted by multiple authors and dismissed as very computationally expensive due to the fact that an observer automaton must be calculated for every iteration~\cite{Dubreil2010,Saboori2012}.

\subsection{Other approaches to the Opacity Control Problem}
The opacity control problem has been addressed in the literature under different assumptions. In some works the adversary is unaware of the supervisor, while in others the adversary has complete knowledge of the supervisor's control policy. We consider the latter case, since this introduces the requirement to iteratively calculate the supervisor's control policy and the adversary's estimate of the plant; Table~\ref{tab:opacityAssumptions} summarizes the characteristics of these works.
\begin{table}[htb]
\centering
\begin{tabular}{c p{8cm} c c}
\toprule
Reference & Approach & Event Sets & Opacity Type \\
\midrule
\cite{Badouel2007} &Sub-lattice of kernels & $\Sigma_a \subseteq \Sigma_s$ & Language-based \\
& & $\Sigma_c = \Sigma$ & \\[0.1cm]
\cite{Takai2008} & Supremal closed, controllable, and opaque sublanguage& $\Sigma_a \subseteq \Sigma_s$ & Language-based \\
& & $\Sigma_c \subseteq \Sigma_s$ & \\
& & $\Sigma_s = \Sigma$ & \\[0.1cm]
\cite{Ben-Kalefa2011} & Supremal controllable, observable, and opaque language& $\Sigma_a = \Sigma_s$ & Language-based \\
& & $\Sigma_c \subseteq \Sigma_s$ & \\[0.1cm]
\cite{Dubreil2010} & Condensed state estimates & $\Sigma_a \subseteq \Sigma_s$ & Current-state \\
& & $\Sigma_c \subseteq \Sigma_s$ & \\[0.1cm]
\cite{Yin2015} & All-Inclusive Controller for Opacity  & $\Sigma_a \subseteq \Sigma_s$ & Current-state  \\[0.1cm]
\cite{Yin2016a} & All-Enforcement Structure & $\Sigma_a = \Sigma_s$ & Current-state \\[0.1cm]
\cite{Tong2018a} & Augmented I-observer & -- & Current-state  \\[0.1cm]
This paper & Plant behaviour composed with the adversary's view & $\Sigma_a \subseteq \Sigma_s$ & Current-state\\
& & $\Sigma_c \subseteq \Sigma_s$ & \\
\bottomrule
\end{tabular}
\caption{Comparing opacity control problem formulations with a supervisor-aware adversary. $\Sigma$ is the set of all events, $\Sigma_c$ is the supervisor's set of controllable events, $\Sigma_s$ is the supervisor's set of observable events, and $\Sigma_a$ is the adversary's set of observable events.}
\label{tab:opacityAssumptions}
\end{table}

Addressing language-based opacity, Badouel et al.\ presented an iterative method to design a supervisor that guaranteed concurrent opacity assuming there are no uncontrollable events in the plant~\cite{Badouel2007}. Takai and Oka proposed a method with one-step convergence under the strong assumption that, for any pair of indistinguishable strings, any uncontrollable and observable event that can occur after one string can occur after the other~\cite{Takai2008}. Finally, Ben-Kalefa and Lin iteratively enforced opacity and controllability for the language generated by the plant~\cite{Ben-Kalefa2011}. This is the na{\"{i}}ve approach applied to languages and still requires the construction of a supervisor to enforce the final language produced.

Considering current-state opacity, Yin and Lafortune presented the \emph{All-Inclusive Controller for Opacity}~\cite{Yin2015} and the \emph{All Enforcement Structure}~\cite{Yin2016a}, which both embed a game between supervisor and plant in a bipartite transition structure where the supervisor's goal is to enforce the specified property and the plant's is to violate the specified property. More similar to our approach to the opacity control problem are \emph{condensed state estimates},~\cite{Dubreil2010}, and the \emph{augmented I-observer},~\cite{Tong2018a}.

Dubreil et al.\ showed that solving the opacity control problem under full observation induces a general solution to the opacity control problem when $\Sigma_a \subseteq \Sigma_s$~\cite{Dubreil2010}. Their approach to enforcing opacity through supervisory control produces an automaton whose states track the plant's actual state as well as condensed state estimates, which contain the set of states that the adversary believes the plant could be in if the last transition it observed was the last transition that occurred in the plant. From the condensed state estimates, the supervisor then reasons about ``loosing paths,'' which are the traces from the current state that would lead to the disclosure of a secret~\cite{Dubreil2010}.

Our approach to the opacity control problem is inspired by the condensed state estimates method, which we improve upon in two ways. First, the parallel composition between the plant and an observer represent the adversary's beliefs about the plant more intuitively than condensed state estimates do. Second, our representation of the adversary's beliefs removes the need for unrolling condensed state estimates to determine whether a particular string will disclose a system secret. 

Finally, Tong et al.'s \emph{augmented I-observer} is the parallel composition between the plant and the observer representing the adversary's view of the plant~\cite{Tong2018a}. The supremal G-opaque sublanguage -- a language-based property equivalent to current-state opacity -- is computed and a supervisor synthesized to enforce it as a specification~\cite[Definition 7]{Tong2018a}. This process iterates until the system's generated language is G-opaque, i.e., the system is current-state opaque. The augmented I-observer method allows $\Sigma_a$ and $\Sigma_s$ to be incomparable, but Tong et al.\ restricted the majority of their discussion to non-supervisor-aware adversaries. With a supervisor-aware adversary, this generality comes with the additional computational cost of updating both the plant and augmented I-observer at every step of the iteration. With the reasonable assumption that $\Sigma_a \subseteq \Sigma_s$, our method is able to simply refine the parallel composition of the plant and adversary's view instead.

\section{Relating Observers: Subautomata versus Subobservers} \label{sec:relatingObservers}
The key to our method is that we are able to relate each successive observer automaton to the one that came before it. To begin, when supervisory control is applied to enforce a regular-language specification for an automaton $G$, the controlled system can be modelled as another automaton, $G'$. We begin, therefore, with an understanding that there is a relationship between $G$ and $G'$. This is the subautomaton relationship, a general relationship that goes beyond supervisory control ~\cite[p.~86]{Cassandras2008}.
\begin{definition}[Subautomaton] \label{def:subautomaton}
We say that $G' = (Q',\Sigma,\delta_{G'},q'_0,Q'_m)$ is a \textbf{subautomaton} of $G = (Q,\Sigma,\delta_G,q_0,Q_m)$, denoted by $G'\ \sqsubseteq\ G$, if $$\delta_{G'}(q'_0,s) = \delta_G(q_{0},s)\quad \forall\ s \in L(G')$$
\end{definition}

The subautomaton relationship is a strong form of correspondence between two automata. It allows us to reason, for example, that the states of $G'$ are a subset of those in $G$, that the initial state in $G'$ is the same as in $G$, and, more generally, that we can match the states in $G'$ with a corresponding state in $G$~\cite[p.~86]{Cassandras2008}. This relationship can be very helpful when reasoning about the language produced by each automaton or about termination conditions for certain algorithms.

When the specification of legal behaviour is a subautomaton of $G$, the resulting automaton $G'$ will also be a subautomaton of $G$. Although the idea of a specification is not directly applicable to the opacity control problem, we demonstrate in Section~\ref{sec:stateSplitting} how our proposed method inherently ensures that supervisory control produces a subautomaton at each iteration.
\begin{figure}[htb]
\centering
\includegraphics[width=0.25\textwidth]{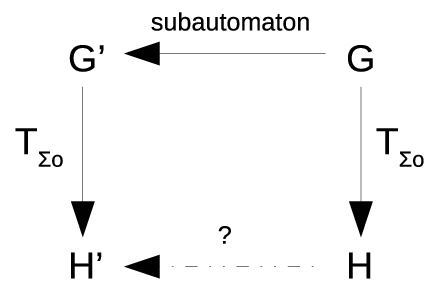}
\caption{We wish to characterize the relationship that exists between $H'$ and $H$.}
\label{fig:motivatingQuestion}
\end{figure}

Beyond the relationship between the uncontrolled and controlled plant, we are interested in how an adversary's observations of a system evolve as supervisory control is enacted. We would like to know whether or not we can establish a relationship between the observer automata $H'$ and $H$, transformations of $G'$ and $G$ respectively with respect to the same alphabet of observable events (Figure \ref{fig:motivatingQuestion}). Does the subautomaton relationship survive the transformation; is $H'$ a subautomaton of $H$? More formally, is it true, given automata $G$, $G'$, $H := T_{\Sigma_o}(G)$ and $H' := T_{\Sigma_o}(G')$, that $$G' \sqsubseteq G \implies H' \sqsubseteq H?$$

Unfortunately this statement is not true because $H'$ and $H$ will have their states coming from $2^{Q'}$ and $2^Q$ respectively, allowing for the possibility that states in $H'$ will not properly match up with states in $H$ (Figure \ref{fig:possibleResultCounterExample}). Although the transition graph of any subautomaton is necessarily a subgraph of the parent automaton's transition graph, the same is not true for automaton related by the subobserver property since states may be ``split'' from $H$ to $H'$. Even if we avoid this, the state labels in $H$ and $H'$ are semantically meaningful, namely they encode the adversary's belief about the state of the system, and these semantics would be lost if we allowed, for example, the state labelled $\{5\}$ to be matched with the state labelled $\{5,6\}$.
\begin{figure}[htb]%
\centering
\includegraphics[width=\textwidth]{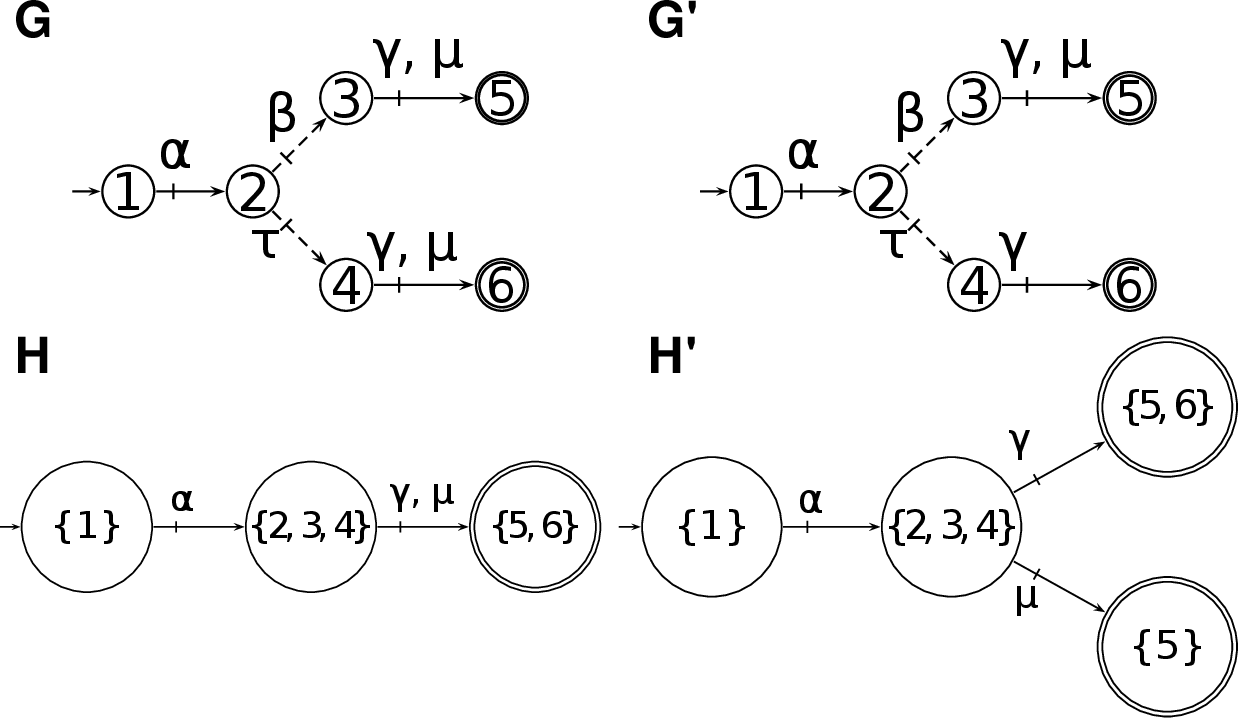}
\caption{Counterexample showing that $H'$ is not a subautomaton of $H$. Initial states are denoted by a small arrow; marked states are denoted with a double ring. Observable events have a solid arrow and unobservable events have a dashed arrow. Controllable events have marks across their arrows.}
\label{fig:possibleResultCounterExample}
\end{figure}

All is not lost, however. Given our insistence that state labels carry semantic meaning, we can relax the definition of subautomaton to allow for this and produce a relationship that is suitable to relate two observers together. We call this the subobserver relationship, Definition~\ref{def:subobserver}.
\begin{definition}[Subobserver] \label{def:subobserver}
We say that $G' = (Q',\Sigma,\delta_{G'},q'_0,Q'_m)$ is a \textbf{subobserver} of $G = (Q,\Sigma,\delta_G,\allowbreak q_0,Q_m)$, denoted $G'\ \tilde{\sqsubseteq}\ G$, if $$\delta_{G'}(q'_0,s) \subseteq \delta_G(q_0,s)\quad \forall\ s \in L(G').$$
\end{definition}
Because we parallel the definition of a subautomaton in our definition of a subobserver it immediately follows that all subautomaton relationships are also subobserver relationships and that the converse is not true. We claim that the subobserver property does complete the relationships between automata $G$, $G'$, $H$, and $H'$ seen in Figure~\ref{fig:motivatingQuestion}.

Before proving this result formally, we begin by establishing some technical notes. To begin, throughout our proofs we will define automata in terms of $G$. Specifically, $G'$ is a subautomaton of $G$ and $H' := T_{\Sigma_o}(G')$ and $H := T_{\Sigma_o}(G)$ are observer automata with our characteristic marking.
\begin{align*}
\begin{aligned}
G =\ &(Q, \Sigma, \delta, q_0, Q_m)\\
G' =\ &(Q', \Sigma, \delta', q_0, Q_m')\\
\end{aligned}
\quad
\begin{aligned}
&H = T_{\Sigma_o}(G) =\ (2^Q, \Sigma_o, \delta_H, q_{0h}, Q_{mh})\\
&H' = T_{\Sigma_o}(G') =\ (2^{Q'}, \Sigma_o, \delta_{H'}, q_{0h'}, Q_{mh'})\\
\end{aligned}
\end{align*}
Next, we define a state's $\epsilon$-reach as the set of states that can be reached from it via the empty string (Definition~\ref{def:epsilonreach})~\cite[p.~71]{Cassandras2008}.
\begin{definition}[$\epsilon$-reach] \label{def:epsilonreach}
For an automaton $G$ and projection $P_{\Sigma_o}$, we call the set of states that can be reached from state $q$ via a string whose projection is $\epsilon$ as the \emph{$\epsilon$-reach} of $q$. Formally, $$\epsilon R_G(q) \coloneqq \{ q' \in Q \ |\ \exists\ s \in \Sigma^*, P_{\Sigma_o}(s) = \epsilon, \delta(q,s) = q'\}.$$ By definition a state is always in its own $\epsilon$-reach, $q \in \epsilon$R$(q)$.
\end{definition}
We also show that if $G' \sqsubseteq G$ then a state's $\epsilon$-reach in $G'$ is a subset of that state's $\epsilon$-reach in $G$.
\begin{lemma}[The $\epsilon$-reach of any state in a subautomaton is a subset of that state's $\epsilon$-reach in the parent automaton] \label{lem:epsilonReach}
Given automata $G = (Q,\Sigma,\delta_G,q_0,Q_m)$ and $G' = (Q',\Sigma,\delta_{G'},q_{0'},Q'_m)$. Then $$G' \sqsubseteq G \implies \epsilon R_{G'}(q) \subseteq \epsilon R_{G}(q)\ \forall\ q \in Q'.$$
\end{lemma}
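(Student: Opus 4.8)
The plan is to unwind the definition of the $\epsilon$-reach and exploit the fact that, by hypothesis, $G' \sqsubseteq G$, so that the two transition functions agree along every string in $L(G')$. Fix $q \in Q'$ and let $q'' \in \epsilon R_{G'}(q)$. By Definition~\ref{def:epsilonreach} there is a string $s \in \Sigma^*$ with $P_{\Sigma_o}(s) = \epsilon$ and $\delta_{G'}(q, s) = q''$. I want to conclude that $\delta_G(q, s) = q''$ as well, which would immediately give $q'' \in \epsilon R_G(q)$ and hence the desired inclusion.

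The first substantive step is to relate transitions from an arbitrary reachable state $q$ (not just the initial state) to the subautomaton hypothesis, which is stated only in terms of $q_0$. Since $q \in Q'$ and (by the remarks following Definition~\ref{def:subautomaton}) every state of $G'$ is reachable via some string in $L(G')$, pick $w \in L(G')$ with $\delta_{G'}(q_0, w) = q$. Then $ws \in L(G')$ because $\delta_{G'}(q_0, ws) = \delta_{G'}(q, s) = q''$ is defined. Applying the subautomaton property to both $w$ and $ws$ gives $\delta_G(q_0, w) = \delta_{G'}(q_0, w) = q$ and $\delta_G(q_0, ws) = \delta_{G'}(q_0, ws) = q''$. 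Since $\delta_G$ is a (partial) function, $\delta_G(q_0, ws) = \delta_G(\delta_G(q_0,w), s) = \delta_G(q, s)$, so $\delta_G(q, s) = q''$. The second step is then immediate: the same witness string $s$ (same projection $\epsilon$, now a valid transition in $G$ from $q$ to $q''$) certifies $q'' \in \epsilon R_G(q)$. Since $q''$ was an arbitrary element of $\epsilon R_{G'}(q)$, we get $\epsilon R_{G'}(q) \subseteq \epsilon R_G(q)$, and since $q \in Q'$ was arbitrary, the lemma follows.

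The main obstacle I anticipate is purely bookkeeping rather than conceptual: one must be slightly careful that the subautomaton definition quantifies over $L(G')$ and is phrased relative to the initial state $q_0$, whereas $\epsilon$-reach is defined locally at an arbitrary state $q$. Bridging that gap requires the observation that states of a subautomaton inherit reachability from $G'$ and the "prefix-closure" fact that $ws \in L(G')$ whenever the induced transition is defined — both of which are standard and follow directly from the definitions already in the excerpt. No nondeterminism is involved here (the automata in question are DFAs, so $\delta$ and $\delta'$ are partial functions), which keeps the argument short; the more delicate handling of power-set states and nondeterminism is deferred to the later results about $H$ and $H'$.
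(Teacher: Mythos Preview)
Your proof is correct and, at the level of underlying content, equivalent to the paper's: both hinge on the fact that a transition $\delta_{G'}(q,s)=q''$ forces $\delta_G(q,s)=q''$. The presentations differ, though. The paper argues by set-builder inclusion: it writes $\epsilon R_{G'}(q)$ as a comprehension over $Q'$ and $\delta_{G'}$, enlarges it by unioning in the ``missing'' pieces (states reached in $G$ via strings where $\delta_{G'}$ is undefined, and states in $Q\setminus Q'$), and collapses the union to $\epsilon R_G(q)$. That collapse tacitly uses the implication $\delta_{G'}(q,s)!\Rightarrow \delta_{G'}(q,s)=\delta_G(q,s)$ without justification. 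You instead chase an element and make this step explicit by pulling back to the initial state: pick $w\in L(G')$ reaching $q$, apply the subautomaton hypothesis to $w$ and $ws$, and read off $\delta_G(q,s)=q''$. Your route is slightly longer but more self-contained, since Definition~\ref{def:subautomaton} is phrased only at $q_0$; the paper's route is terser but leaves exactly that bridge implicit. Either way, the only assumption needed is that every $q\in Q'$ is reachable in $G'$, which you flag and which the paper also takes for granted.
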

\begin{proof}
To begin, $G' \sqsubseteq G$ implies that $Q' \subseteq Q$. We can therefore determine the $\epsilon$-reach of state $q$ with respect to both $G$ and $G'$ for all $q \in Q'$.
\begin{align*}
\epsilon R_{G'}(q) =&\{ q' \in Q'\ |\ \exists\ s \in \Sigma^*, P_{\Sigma}(s) = \epsilon, \delta_{G'}(q,s) = q'\}\\
\subseteq &\{ q' \in Q' \ |\ \exists\ s \in \Sigma^*, P_{\Sigma}(s) = \epsilon, \delta_{G'}(q,s) = q'\}\\
&\cup \{q' \in Q'\ |\ \exists\ s \in \Sigma^*,P_{\Sigma}(s) = \epsilon,\delta_{G}(q,s) = q', \neg \delta_{G'}(q,s)!\}\\
&\cup \{ q' \in Q \setminus Q' \ |\ \exists\ s \in \Sigma^*, P_{\Sigma}(s) = \epsilon, \delta_G(q,s) = q'\}\ \tag*{by set inclusion}\\
= &\{ q' \in Q\ |\ \exists\ s \in \Sigma^*, P_{\Sigma}(s) = \epsilon, \delta_G(q,s) = q'\}\\
= & \epsilon R_{G}(q)
\end{align*}
\qed
\end{proof}
We are now ready to prove that the relationship between the observer automata $H$ and $H'$ (the dashed line in Figure~\ref{fig:motivatingQuestion}) is the subobserver relationship defined in Definition~\ref{def:subobserver}.
\begin{theorem}[The transformation of a subautomaton is itself a subobserver] \label{lem:realResult}
Given automata $G$ and $G'$ and their respective transformations, $H := T_{\Sigma_o}(G)$ and $H':=T_{\Sigma_o}(G')$. Then $$G' \sqsubseteq G \implies H'\ \tilde{\sqsubseteq}\ H.$$
\end{theorem}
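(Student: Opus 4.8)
The plan is to prove the defining inclusion of Definition~\ref{def:subobserver}, namely $\delta_{H'}(q_{0h'},s)\subseteq\delta_H(q_{0h},s)$ for every $s\in L(H')$, by induction on the length of $s$, using the recursive description of the observer (NFA-to-DFA) construction together with Lemma~\ref{lem:epsilonReach}. Before the induction I would record three preliminary facts. (i) The explicit recursive form of the observer transition functions: $q_{0h}=\epsilon R_G(q_0)$, and for a subset $X$ reached in $H$ and an observable event $\sigma\in\Sigma_o$ with $\delta(q,\sigma)!$ for some $q\in X$, one has $\delta_H(X,\sigma)=\bigcup_{q\in X,\,\delta(q,\sigma)!}\epsilon R_G(\delta(q,\sigma))$; the same holds for $H'$ with $\delta'$, $Q'$ and $\epsilon R_{G'}$. (ii) $L(H')$ is prefix-closed (it is a generated language), so the induction hypothesis is available at every prefix of $s$. (iii) Every state that appears inside a subset reached in $H'$ is reachable in $G'$; this falls straight out of the construction, since $q_{0h'}=\epsilon R_{G'}(q_0)$ consists of $G'$-reachable states and each application of $\delta_{H'}$ only adds $G'$-reachable states.

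Next I would upgrade the subautomaton hypothesis from the initial-state form of Definition~\ref{def:subautomaton} to a transition-level statement: if $q\in Q'$ is reachable in $G'$ and $\delta'(q,\sigma)!$, then $\delta'(q,\sigma)=\delta(q,\sigma)$ (and in particular $\delta(q,\sigma)!$). This follows by choosing $u\in L(G')$ with $\delta'(q_0,u)=q$, noting $u\sigma\in L(G')$, and applying Definition~\ref{def:subautomaton} to $u$ and to $u\sigma$, using $\delta'(q_0,u)=\delta(q_0,u)$ to see that both transitions start from the same state $q$.

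With these in hand the induction is routine. Base case: $\delta_{H'}(q_{0h'},\epsilon)=\epsilon R_{G'}(q_0)\subseteq\epsilon R_G(q_0)=\delta_H(q_{0h},\epsilon)$, directly from Lemma~\ref{lem:epsilonReach} (applicable since $q_0\in Q'$). Inductive step: suppose $s\sigma\in L(H')$ with $\sigma\in\Sigma_o$; prefix-closedness gives $s\in L(H')$, so the hypothesis yields $\delta_{H'}(q_{0h'},s)\subseteq\delta_H(q_{0h},s)$. Since $s\sigma\in L(H')$ there is some $q\in\delta_{H'}(q_{0h'},s)$ with $\delta'(q,\sigma)!$; by fact~(iii) and the transition-level statement, $\delta(q,\sigma)=\delta'(q,\sigma)$ is defined, and $q\in\delta_H(q_{0h},s)$, so $\delta_H(q_{0h},s\sigma)$ is defined (hence $s\sigma\in L(H)$, which is what makes the target inclusion meaningful). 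Then for every $q\in\delta_{H'}(q_{0h'},s)$ with $\delta'(q,\sigma)!$ we have $q\in\delta_H(q_{0h},s)$, $\delta(q,\sigma)=\delta'(q,\sigma)$, and $\epsilon R_{G'}(\delta'(q,\sigma))\subseteq\epsilon R_G(\delta(q,\sigma))$ by Lemma~\ref{lem:epsilonReach}; taking the union over all such $q$ and comparing with the recursive form from fact~(i) gives $\delta_{H'}(q_{0h'},s\sigma)\subseteq\delta_H(q_{0h},s\sigma)$, completing the induction and hence $H'\ \tilde{\sqsubseteq}\ H$.

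The main obstacle I anticipate is not the induction but the bookkeeping around partiality and reachability: Definition~\ref{def:subautomaton} is phrased only for strings read from the initial state, so the transition-level equality $\delta'(q,\sigma)=\delta(q,\sigma)$ must be justified only for states $q$ genuinely reachable in $G'$, which is exactly why fact~(iii) is needed; and one must keep checking that $\delta_H$ is defined wherever an inclusion is asserted, i.e.\ that $s\in L(H')\implies s\in L(H)$, which is delivered by nonemptiness of observer subsets but should be argued rather than assumed. If one prefers to avoid the induction, the same content can be packaged by first proving the closed form $\delta_H(q_{0h},s)=\{\,\delta(q_0,t):t\in L(G),\ P_{\Sigma_o}(t)=s\,\}$ (and likewise for $H'$), after which the inclusion is immediate from $L(G')\subseteq L(G)$ and $\delta'=\delta$ on $L(G')$; but that identity is itself established by essentially the same induction, so little is gained.
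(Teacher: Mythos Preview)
Your proposal is correct and follows essentially the same route as the paper: induction on the length of $s\in L(H')$, with the base case handled directly by Lemma~\ref{lem:epsilonReach} and the inductive step obtained by expanding $\delta_{H'}(\,\cdot\,,s\sigma)$ as a union of $\epsilon$-reaches and applying Lemma~\ref{lem:epsilonReach} termwise before enlarging to the corresponding union over $\delta_H(q_{0h},s)$. Your additional bookkeeping---making the transition-level equality $\delta'(q,\sigma)=\delta(q,\sigma)$ explicit for $G'$-reachable $q$, and verifying $s\sigma\in L(H)$ so that the target subset is defined---fills in steps the paper leaves implicit but does not change the argument.
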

\begin{proof}
The proof proceeds by induction.

\textbf{Base case}: $s = \epsilon$, $|s| = 0$.
\begin{align*}
q_{0h'} &= \epsilon R_{G'}(q_0)\\
&\subseteq \epsilon R_{G}(q_0) \tag*{by Lemma 1}\\
&= q_{oh}
\end{align*}

\textbf{Inductive Hypothesis}: Suppose $s \in L(H')$, $|s| = n$, and $\bar{Q'} = \delta_{H'}(q_{0h'},s) \subseteq \delta_H( q_{0h},s) = \bar{Q}$.

\textbf{Inductive Step}: Now we show that for $s\sigma \in L(H')$, $|s\sigma| = n+1$, then $\delta_{H'}(q_{0h'},s\sigma) \subseteq \delta_H(q_{0h},s\sigma)$.

We proceed by translating the set of states $\delta_{H'}(q_{0h'},s\sigma)$ into set builder notation and then adding well-chosen states to produce the set of states $\delta_H(q_{0h},s\sigma)$.
\begin{align*}
\delta_{H'}(q_{0h'},s\sigma) = &\delta_{H'}(\bar{Q'},\sigma) \tag*{by definition in our IH}\\
= &\bigcup_{\substack{q' \in \bar{Q'}\\\delta_{G'}(q',\sigma)!}} \epsilon R_{G'}(\delta_{G'}(q',\sigma))\tag*{by definition of $H'$}\\
\subseteq &\bigcup_{\substack{q' \in \bar{Q'}\\\delta_{G'}(q',\sigma)!}} \epsilon R_{G}(\delta_{G'}(q',\sigma)) \tag*{by Lemma 1}\\
\subseteq &\bigcup_{\substack{q' \in \bar{Q'}\\\delta_{G'}(q',\sigma)!}} \epsilon R_{G}(\delta_{G'}(q',\sigma))\quad \cup  \bigcup_{\substack{q' \in \bar{Q'}\\\neg \delta_{G'}(q',\sigma)!, \delta_{G}(q',\sigma)!}} \epsilon R_{G}(\delta_G(q',\sigma))\tag*{by set inclusion}\\
\subseteq &\bigcup_{\substack{q' \in \bar{Q'}\\\delta_{G'}(q',\sigma)!}} \epsilon R_{G}(\delta_{G'}(q',\sigma))\quad \cup  \bigcup_{\substack{q' \in \bar{Q'}\\\neg \delta_{G'}(q',\sigma)!, \delta_{G}(q',\sigma)!}} \epsilon R_{G}(\delta_G(q',\sigma))\quad \cup\\
&\bigcup_{\substack{q \in \bar{Q} \setminus \bar{Q'}\\\delta_{G}(q,\sigma)!}} \epsilon R_{G}(\delta_G(q,\sigma))\tag*{by set inclusion}\\
= &\bigcup_{\substack{q' \in \bar{Q'}\\\delta_{G}(q',\sigma)!}} \epsilon R_{G}(\delta_G(q',\sigma))\quad \cup \bigcup_{\substack{q \in \bar{Q} \setminus \bar{Q'}\\\delta_{G}(q,\sigma)!}} \epsilon R_{G}(\delta_G(q,\sigma))\tag*{combining the first two terms}\\
=  &\bigcup_{\substack{q \in \bar{Q}\\\delta_{G}(q,\sigma)!}} \epsilon R_{G}(\delta_G(q,\sigma))\tag*{since by IH $\bar{Q'} \subseteq \bar{Q}$}\\
= &\delta_{H}(\bar{Q},\sigma)\\
= &\delta_{H}(q_{0h},s\sigma)\tag*{by definition in our IH}
\end{align*}
\qed
\end{proof}
Having proven Theorem \ref{lem:realResult}, we can relate our four automata from Figure~\ref{fig:motivatingQuestion}: $G'$ is a subautomaton of $G$, $H'$ and $H$ are transformations of $G'$ and $G$ respectively, and $H'$ is a subobserver of $H$.

\section{Updating the Observer Automaton} \label{sec:updatingObservers}
An adversary's view of a plant is tied to the plant's structure, so we explicitly link the adversary's plant estimate and the plant's actual state through the parallel composition of their respective automata (Definition~\ref{def:parallelComposition}~\cite[p.~80]{Cassandras2008}).
\begin{definition}[Parallel Composition] \label{def:parallelComposition}
Given two automata $G_1 = (Q_1, \Sigma_1, \delta_1, q_{01}, Q_{m1})$ and $G_2 = (Q_2, \Sigma_2, \delta_2, q_{02}, Q_{m2})$, their \textbf{parallel composition} is defined as:
\begin{align*}
G_1\ ||\ G_2 = &Ac(Q_1 \times Q_2, \Sigma_1 \cup \Sigma_2, \delta_{12}, (q_{01},q_{02}), Q_{m1} \times Q_{m2})\\
&\text{where } \delta_{12}((q_1,q_2),\sigma)= \begin{cases}(\delta_1(q_1,\sigma),\delta_2(q_2,\sigma))\quad\text{if }\sigma \in \Sigma_1 \cap \Sigma_2\\
(\delta_1(q_1,\sigma),q_2)\quad\text{if }\sigma \in \Sigma_1 \setminus \Sigma_2\\
(q_1,\delta_2(q_2,\sigma))\quad\text{if }\sigma \in \Sigma_2 \setminus \Sigma_1\\
\text{undefined}\quad\text{otherwise}\\
\end{cases}
\end{align*}
\end{definition}

The parallel composition $G\ ||\ H$, contains all of the information required to track the evolution of the plant's behaviour and the adversary's beliefs about the plant. We will denote states in $G\ ||\ H$ using the notation $\frac{q}{A}$ where $q$ is a state in $G$ and $A$ is a state in $H$. It is useful to observe that a state $\frac{q}{A}$ is accessible in $G\ ||\ H$ if and only if there is a string $s \in L(G)$ that leads to state $q$ in $G$ and whose projection leads to estimate $A$ in $H$ (Lemma~\ref{lem:stateInGH}). As a corollary to this Lemma, if a state $\frac{q}{A}$ is accessible in $G\ ||\ H$ then the state $q$ is an element of the estimate $A$.
\begin{lemma} \label{lem:stateInGH}
Given the parallel composition $G\ ||\ H$ where $G = (Q,\Sigma,\delta_G,q_0,Q_m)$ is a DFA and $H := T_{\Sigma_o}(G)$ is an observer automaton. A state $\frac{q}{A}$ is accessible in $G\ ||\ H$ if and only if $$(\exists s \in L(G))(\delta_G(q_0,s) = q)(\{q' \in Q\ |\ (\exists s' \in L(G)\ |\ P_{\Sigma_o}(s) = P_{\Sigma_o}(s'))(\delta_G(q_0,s')=q')\} = A).$$
\end{lemma}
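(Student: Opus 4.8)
The plan is to prove the biconditional directly by unfolding the definitions of $\epsilon$-reach, of the observer transformation $T_{\Sigma_o}$, and of accessibility under parallel composition. The key technical fact to lean on is that, by construction of $H = T_{\Sigma_o}(G)$, the unique estimate reached by a string's projection $P_{\Sigma_o}(s)$ is exactly the set of $G$-states reachable by strings $s'$ with $P_{\Sigma_o}(s') = P_{\Sigma_o}(s)$; that is, $\delta_H(q_{0h}, P_{\Sigma_o}(s)) = \{ q' \in Q \mid \exists s' \in L(G),\ P_{\Sigma_o}(s') = P_{\Sigma_o}(s),\ \delta_G(q_0, s') = q' \}$. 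This is the standard correctness property of the subset construction combined with $\epsilon$-closure, and I would state it as an auxiliary claim (or cite the observer-construction references already in the paper) rather than reprove it from scratch.

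First I would unwind what it means for $\frac{q}{A}$ to be accessible in $G \parallel H$. Since $\Sigma_1 = \Sigma$ and $\Sigma_2 = \Sigma_o \subseteq \Sigma$, the parallel composition synchronizes on the observable events and lets $G$ move alone on unobservable ones, so a string $s \in \Sigma^*$ drives $G \parallel H$ from $(q_0, q_{0h})$ to $(\delta_G(q_0,s),\, \delta_H(q_{0h}, P_{\Sigma_o}(s)))$ exactly when $s \in L(G)$. Hence $\frac{q}{A}$ is accessible iff there exists $s \in L(G)$ with $\delta_G(q_0,s) = q$ and $\delta_H(q_{0h}, P_{\Sigma_o}(s)) = A$. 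Combining this with the auxiliary claim above, accessibility of $\frac{q}{A}$ is equivalent to: there exists $s \in L(G)$ with $\delta_G(q_0,s) = q$ and $A = \{ q' \mid \exists s' \in L(G),\ P_{\Sigma_o}(s') = P_{\Sigma_o}(s),\ \delta_G(q_0,s') = q' \}$. The statement in the lemma phrases this last equality as ``for all $s' \in L(G)$ with $P_{\Sigma_o}(s) = P_{\Sigma_o}(s')$, $\delta_G(q_0,s') \in A$,'' which gives the $\supseteq$-containment of that set in $A$; I would note that the reverse containment is automatic because $A$ is precisely the estimate reached (every element of $A$ is witnessed by some such $s'$), so the one-directional quantified statement in the lemma is in fact equivalent to the set equality.

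For the forward direction ($\Rightarrow$): assume $\frac{q}{A}$ accessible, pick the witnessing $s$, read off $\delta_G(q_0,s) = q$, and then for any $s' \in L(G)$ with $P_{\Sigma_o}(s') = P_{\Sigma_o}(s)$ observe that $s'$ also drives $H$ to $A$, so $\delta_G(q_0,s') \in \delta_H(q_{0h}, P_{\Sigma_o}(s')) = A$ by the auxiliary claim. For the reverse direction ($\Leftarrow$): given $s \in L(G)$ with $\delta_G(q_0,s) = q$ and the quantified membership condition, run $s$ through $G \parallel H$; it reaches $\frac{q}{\delta_H(q_{0h}, P_{\Sigma_o}(s))}$, and I must show $\delta_H(q_{0h}, P_{\Sigma_o}(s)) = A$. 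The $\subseteq$ direction is the auxiliary claim together with the hypothesis (every state in the reached estimate is $\delta_G(q_0,s')$ for some such $s'$, hence in $A$); the $\supseteq$ direction I get because $q \in A$ would need to hold — here I should be careful, since the lemma as literally stated only asserts one containment of a set in $A$, not that $A$ equals exactly the estimate, so strictly I should either (a) interpret the statement as forcing $A$ to be the estimate because estimates are the only states of $H$, or (b) weaken the claimed iff to the implication that is actually true and flag it.

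The main obstacle I anticipate is this last bookkeeping point about whether the lemma's right-hand side pins down $A$ uniquely: an estimate $A$ in $H$ is by construction a maximal such set (it contains every state reachable by a projection-equivalent string), so the one-sided condition plus the fact that $A \in 2^Q$ is actually a reachable state of $H$ does force $A$ to equal the estimate — but making that argument airtight requires invoking that the only states of $H$ that appear in $G \parallel H$ are genuine $\epsilon$-closed subset-construction states, i.e. reachable estimates. So the cleanest route is to prove the auxiliary claim $\delta_H(q_{0h}, P_{\Sigma_o}(s)) = \{q' \mid \exists s' \in L(G),\ P_{\Sigma_o}(s') = P_{\Sigma_o}(s),\ \delta_G(q_0,s') = q'\}$ first by induction on $|P_{\Sigma_o}(s)|$ (base case: $\epsilon$-reach of $q_0$; inductive step: one observable event followed by $\epsilon$-closure, matching Definition~\ref{def:epsilonreach} and the transition rule of $T_{\Sigma_o}$), and then read both directions of the lemma off of it essentially for free, noting the equivalence between the set equality and the quantified one-sided form.
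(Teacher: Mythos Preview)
Your approach is essentially the paper's own: unfold the parallel-composition transition rule to get $\delta_{G\parallel H}\bigl(\tfrac{q_0}{A_0},s\bigr)=\tfrac{\delta_G(q_0,s)}{\delta_H(A_0,P_{\Sigma_o}(s))}$, then invoke the characterization of observer states to translate $\delta_H(A_0,P_{\Sigma_o}(s))=A$ into the quantified condition on $s'$. The paper's proof is in fact terser than yours---it simply asserts $\delta_H(A_0,P_{\Sigma_o}(s))=A$ in the $\Leftarrow$ direction without addressing the one-sided-containment issue you flagged---so your extra care there (and your proposed inductive proof of the auxiliary claim) goes beyond what the paper supplies rather than deviating from it.
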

\begin{proof}
See the Appendix.
\end{proof}

The parallel composition has strong properties; first, it follows that $L(G\ ||\ H) = L(G)$~\cite{Jiraskova2012}. More interesting is that the accessible states of $T_{\Sigma_o}(G\ ||\ H)$ are pairwise disjoint sets of states of $G\ ||\ H$, i.e., every state in $G\ ||\ H$ belongs to one and only one state in $T_{\Sigma_o}(G\ ||\ H)$. An automaton that satisfies this property is a \emph{state-partition automaton} (SPA). The SPA property allows for simple refinement operations, but we will not explicitly use it as Cho and Marcus used it,~\cite{Cho1989a}, since exploiting the SPA property would require a second transformation operation, $T_{\Sigma_o}(G\ ||\ H)$, and would introduce additional preprocessing overhead to our algorithms.

\subsection{Using the subobserver property}
We are given the parallel composition of a plant, $G$, and an adversary's view of that plant, $H$. Given that the plant's behaviour has been restricted, the \texttt{REFINE} algorithm (Algorithm~\ref{alg:implicitCalculations}) returns the parallel composition of the restricted plant and the adversary's updated view of the plant without explicitly calculating this adversary view. To begin, we remove states from the parallel composition that should be made inaccessible. Next, we account for the adversary's ability to reason about the restricted plant. If there is a string $s$ in $L(G)$ that is no longer possible in $G'$, then the adversary cannot confuse other strings in $L(G')$ with $s$. We must therefore remove the state that $s$ leads to in $G$ from estimates where the adversary confused $s$ with other strings.

To simplify our reasoning, we assume that the plant $G$ and the parallel composition $G\ ||\ H$ have isomorphic state transition diagrams. Although this is not always the case, we will demonstrate in Section~\ref{sec:stateSplitting} that the parallel composition operation inherently creates an automaton $\hat{G}$ that is language equivalent to $G$ and whose state transition diagram is isomorphic with that of $G\ ||\ H$.
\begin{algorithm}[htb]
\KwData{Automaton $G\ ||\ H$, $\Delta$, a list of states to make inaccessible in $G\ ||\ H$.}
\KwResult{An automaton $M$, the accessible portion of $G'\ ||\ H'$.}
$M = G\ ||\ H$\; \label{alg:denoteByM}
\tcc{$A_r(A)$ is a set of states that should be removed from the adversary's estimate $A$.}
Initialize $A_r(\cdot)$ as $\emptyset$ for each estimate\; \label{alg:initializeAr}
\tcc{The set $\Delta_Q$ compiles all of the states that must be made inaccessible to make the states in $\Delta$ inaccessible. We therefore start by adding all of the states in $\Delta$ to the set $\Delta_Q$.}
$\Delta_Q = \Delta$ \label{alg:deltaToDeltaQ}\;
\While{$\Delta_Q$ is not empty \label{alg:buildDeltaQStart}}
{Take $\frac{q}{A}$ as an element of $\Delta_Q$\; 
\ForEach{transition leading into $\frac{q}{A}$, $t = \langle \frac{q'}{A'}, \sigma, \frac{q}{A} \rangle_M$}
{
\uIf(the transition $t$ is uncontrollable){$\sigma \in \Sigma_{uc}$}
{Add $\frac{q'}{A'}$ to $\Delta_Q$ \label{alg:uncontrollableToDeltaQ}\;}
}
Remove $\frac{q}{A}$ from $\Delta_Q$\;
Remove $\frac{q}{A}$ from $M$ \label{alg:removeStateInDeltaQ}\;
}
\tcc{States in $M$ sharing the same estimate as newly inaccessible states need to be updated.}
\ForEach{State $\frac{q}{A}$ in $G\ ||\ H$ \label{alg:checkInaccessibleStatesStart}}
{
\If{$\frac{q}{A}$ is not accessible in $M$ \label{alg:testAccessToBuildAr}}
{
$A_r(A) = A_r(A) \cup \{q\}$\; \label{alg:buildAr}
}
} \label{alg:checkInaccessibleStatesEnd}
\tcc{Relabel states in $M$ to account for the observer's updated estimates of the plant.}
\ForEach{State $\frac{q}{A}$ in $M$ \label{alg:accessibleStart}}
{
{Relabel $\frac{q}{A}$ to $\frac{q}{A \setminus A_r(A)}$\;}
} \label{alg:accessibleEnd}
\tcc{Ensure only the accessible portion of $M$ is returned}
$M = Ac(M)$\; \label{alg:onlyAccessible}
\caption{The \texttt{REFINE} algorithm}
\label{alg:implicitCalculations}
\end{algorithm}

\subsection{Technical developments}
Before proving that the \texttt{REFINE} algorithm is correct, we need to develop a few ideas that will be necessary to connect our state-based formulation of opacity with the string-based confusion of the adversary. First, we know that \texttt{REFINE} is guaranteed to produce a subautomaton of $G$ in the first elements of the states of $M$ because it removes states from $G\ ||\ H$.
\begin{lemma} \label{lem:produceGPrimeSubautomaton}
Given the parallel composition $G\ ||\ H$, where $G = (Q,\Sigma,\delta_G,q_0,Q_m)$ is a DFA and $H := T_{\Sigma_o}(G)$ is an observer automaton, and $\Delta$ is a list of states to remove from $G\ ||\ H$. Then the first elements of the states remaining in $G\ ||\ H$ represent a subautomaton of $G$.
\end{lemma}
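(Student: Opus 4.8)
The plan is to exploit the fact that \texttt{REFINE} (Algorithm~\ref{alg:implicitCalculations}) is purely subtractive on the transition structure of $G\ ||\ H$: it only removes states (those in $\Delta$, together with the states the $\Delta_Q$ loop forces to be inaccessible through uncontrollable transitions), relabels the estimate component of the survivors, and finally restricts to the accessible portion with $Ac(\cdot)$. None of these operations alters the first component of any state or creates a transition absent from $G\ ||\ H$. So writing $\pi$ for the map $\frac{q}{A}\mapsto q$, I want to show that reading the output $M$ through $\pi$ yields a sub-structure of $G$ that still computes $\delta_G$ on its own language, which is exactly the subautomaton condition of Definition~\ref{def:subautomaton}.

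First I would record the elementary fact that first components track $G$. Since $H$ has alphabet $\Sigma_o\subseteq\Sigma$, Definition~\ref{def:parallelComposition} leaves only two possibilities for an event of $G\ ||\ H$: it is synchronized (in $\Sigma_1\cap\Sigma_2=\Sigma_o$) or taken by $G$ alone (in $\Sigma_1\setminus\Sigma_2=\Sigma\setminus\Sigma_o$), the case $\Sigma_2\setminus\Sigma_1$ being vacuous; in either occurring case the first component of the target is obtained from the first component of the source by $\delta_G$. A one-line induction on $|s|$ then gives $\pi(\delta_{GH}((q_0,q_{0h}),s))=\delta_G(q_0,s)$ whenever the left-hand side is defined, where $\delta_{GH}$ is the transition function of $G\ ||\ H$.

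Next I would pin down the structure of $M$. The relabelling loop sends $\frac{q}{A}$ to $\frac{q}{A\setminus A_r(A)}$; it fixes every first component, and under the standing assumption of this section that $G$ and $G\ ||\ H$ have isomorphic state transition diagrams, each surviving state has a distinct first component, so the relabelling is a bijection on states that reproduces the transition graph verbatim. Deleting the states flagged by $\Delta$ and $\Delta_Q$ and then applying $Ac(\cdot)$ can only erase states and transitions. Hence $\delta_M$ is a restriction of $\delta_{GH}$ up to this bijective relabelling, $L(M)\subseteq L(G\ ||\ H)=L(G)$, and the set of first components of $M$'s states, with the transition function inherited from $\delta_M$ and initial state $q_0$, is a genuine DFA $\hat G$ with $L(\hat G)=L(M)$.

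Finally I would conclude: for any $s\in L(\hat G)=L(M)$ the quantity $\delta_{\hat G}(q_0,s)$ is defined and, by the previous two paragraphs, equals $\pi(\delta_M(\cdot,s))=\pi(\delta_{GH}(\cdot,s))=\delta_G(q_0,s)$, which is itself defined because $s\in L(G)$; this is precisely the requirement of Definition~\ref{def:subautomaton}, so $\hat G\sqsubseteq G$. I expect the only real friction to be the middle step: checking line by line that the $\Delta_Q$ loop and the relabelling loop neither merge nor create states and do not add transitions, so that ``the first elements of the states remaining'' actually name a subautomaton rather than some quotient or enlargement of one; the two facts about $\pi$ and about $L(G\ ||\ H)=L(G)$ then do the rest almost mechanically.
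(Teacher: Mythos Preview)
Your core argument is correct and matches the paper's: the parallel-composition rule in Definition~\ref{def:parallelComposition} forces the first component to evolve by $\delta_G$, so after deleting states the projection $\pi$ onto first components still agrees with $\delta_G$ on every surviving string, which is exactly Definition~\ref{def:subautomaton}. The paper's proof is essentially your induction $\pi(\delta_{GH}((q_0,q_{0h}),s))=\delta_G(q_0,s)$ stated in one sentence.

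Where you diverge is scope rather than mathematics. The lemma as stated concerns only the removal of a list $\Delta$ of states from $G\ ||\ H$; it says nothing about the $\Delta_Q$ loop, the relabelling of estimates, or the final $Ac(\cdot)$ in \texttt{REFINE}. You fold all of Algorithm~\ref{alg:implicitCalculations} into the argument and then have to reassure yourself that relabelling is bijective on first components, that no transitions are created, and so on. None of that is needed here: the paper simply observes that deleting states can only shrink the language, and any string that survives still drives the first component exactly as $\delta_G$ dictates. Your extra machinery is not wrong, and it does anticipate how the lemma is invoked inside Theorem~\ref{the:correctnessOfRefine}, but for Lemma~\ref{lem:produceGPrimeSubautomaton} itself it obscures a two-line fact.
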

\begin{proof}
See the Appendix.
\end{proof}
Second, we note that a subautomaton $G'$ can be defined by its parent automaton $G$ and the set of strings that have been removed from $L(G)$. We define an equivalence relation $\rho$ on $\Sigma^*$, which relates strings if they are both in $L(G')$, were both removed from $L(G)$, or were never in $L(G)$.
\begin{definition} \label{def:rhoRelation}
The binary relation $\rho_{G,G'}$ on $\Sigma^*$ is defined as $(s,s') \in \rho_{G,G'}$ if and only if $$(s, s' \in L(G'))\ \vee\ (s, s' \in L(G) \setminus L(G'))\ \vee\ (s, s' \in \Sigma^* \setminus L(G)).$$
\end{definition}
Third, in \texttt{REFINE} we treat strings the same if they lead to the same state in $G$ and the same estimate in $H$. We define an equivalence relation $match_{\Sigma_o}$ on $\Sigma^*$ to capture this, where strings are related if they will receive identical treatment from \texttt{REFINE}.
\begin{definition} \label{def:matchRelation}
Given a DFA $G$ and observer automaton $H \coloneqq T_{\Sigma_o}(G)$. The binary relation $match_{\Sigma_o}$ on $\Sigma^*$ is defined as $$(s,s') \in match_{\Sigma_o} \iff (s \equiv s' \mod eq(G)) \wedge (P_{\Sigma_o}(s) \equiv P_{\Sigma_o}(s') \mod eq(H))$$ and if $s \notin L(G)$ then $(s,s') \in \mu$ if and only if $s' \notin L(G)$ as well.
\end{definition}
We can show that the partition defined by $match_{\Sigma_o}$ is equivalent to the partition defined by $eq(G\ ||\ H)$.
\begin{lemma} \label{lem:matchAndEquiresponse}
Given a DFA $G = (Q,\Sigma,\delta_G,q_0,Q_m)$, a set of observable events $\Sigma_o \subset \Sigma$, and an observer automaton $H \coloneqq T_{\Sigma_o}(G)$. Then $$match_{\Sigma_o} = eq(G\ ||\ H).$$
\end{lemma}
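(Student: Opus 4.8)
The plan is to collapse the whole statement to one structural fact about how $G\ ||\ H$ reads a string. Because $H := T_{\Sigma_o}(G)$ has event set $\Sigma_o \subseteq \Sigma$, the parallel composition of Definition~\ref{def:parallelComposition} specializes nicely: $\Sigma_1 \cap \Sigma_2 = \Sigma_o$, $\Sigma_1 \setminus \Sigma_2 = \Sigma \setminus \Sigma_o$, and $\Sigma_2 \setminus \Sigma_1 = \emptyset$, so an observable event advances both coordinates while an unobservable event advances only the $G$-coordinate and leaves the $H$-coordinate fixed. First I would prove, by induction on $|s|$, that for every $s \in \Sigma^*$
\[
\delta_{G||H}\big((q_0,q_{0h}),s\big) = \big(\delta_G(q_0,s),\ \delta_H(q_{0h},P_{\Sigma_o}(s))\big),
\]
and that the left-hand side is defined if and only if $s \in L(G)$. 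The base case $s=\epsilon$ is immediate. In the inductive step, the two cases ($\sigma \in \Sigma_o$ versus $\sigma \in \Sigma \setminus \Sigma_o$) match exactly the recursion $P_{\Sigma_o}(s\sigma) = P_{\Sigma_o}(s)\sigma$ respectively $P_{\Sigma_o}(s\sigma) = P_{\Sigma_o}(s)$ from Definition~\ref{def:projectionString}; the ``defined iff $s\in L(G)$'' clause uses $L(G\ ||\ H) = L(G)$ (cited above) together with $L(H) = P_{\Sigma_o}(L(G))$ by construction of $T_{\Sigma_o}$, which guarantees the second coordinate $\delta_H(q_{0h},P_{\Sigma_o}(s))$ is automatically defined whenever $s \in L(G)$.

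With this identity in hand, I would simply unwind both relations and compare. Applying Definition~\ref{def:equiresponseRelation} to $G\ ||\ H$ and substituting the identity: for $s,s' \in L(G)$, $(s,s') \in eq(G\ ||\ H)$ iff $\delta_G(q_0,s) = \delta_G(q_0,s')$ and $\delta_H(q_{0h},P_{\Sigma_o}(s)) = \delta_H(q_{0h},P_{\Sigma_o}(s'))$; and for $s \notin L(G)$ (where the $G\ ||\ H$-transition is undefined), $(s,s') \in eq(G\ ||\ H)$ iff $s' \notin L(G)$. This matches Definition~\ref{def:matchRelation} clause for clause: when $s,s' \in L(G)$ both conjuncts of the $match_{\Sigma_o}$ condition reduce to plain equality of defined states in $G$ and in $H$ (the latter defined since $P_{\Sigma_o}(L(G)) \subseteq L(H)$), while the trailing clause of Definition~\ref{def:matchRelation} handles $s \notin L(G)$ precisely as above. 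Hence the two relations contain the same pairs and $match_{\Sigma_o} = eq(G\ ||\ H)$.

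The step I expect to be the main obstacle is the treatment of strings outside $L(G)$, and it is worth handling with care rather than waving at. The subtlety is that $P_{\Sigma_o}(s)$ can lie in $L(H)$ even when $s \notin L(G)$ — take $s = s_1\sigma$ with $s_1 \in L(G)$, $\sigma$ unobservable, and $s_1\sigma \notin L(G)$ — so the bare conjunction $(s \equiv s' \mod eq(G)) \wedge (P_{\Sigma_o}(s) \equiv P_{\Sigma_o}(s') \mod eq(H))$ does \emph{not} by itself coincide with $eq(G\ ||\ H)$: two strings both outside $L(G)$ (hence $eq(G\ ||\ H)$-equivalent) could be split apart by the $eq(H)$-conjunct. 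This is exactly the reason Definition~\ref{def:matchRelation} carries the override clause for $s \notin L(G)$, and the proof must invoke that clause in this case instead of the conjunction. Once that point is isolated, the remainder is routine bookkeeping about partial functions.
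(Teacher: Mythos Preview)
Your proposal is correct and follows essentially the same approach as the paper: both reduce the claim to the factorization $\delta_{G\,||\,H}\big((q_0,q_{0h}),s\big) = \big(\delta_G(q_0,s),\ \delta_H(q_{0h},P_{\Sigma_o}(s))\big)$ and then unfold the two definitions. You are more careful than the paper in two respects---you justify that factorization by an explicit induction rather than a bare appeal to Definition~\ref{def:parallelComposition}, and you handle the case $s \notin L(G)$ via the override clause of Definition~\ref{def:matchRelation}, whereas the paper's proof simply restricts attention to $s,s' \in L(G)$.
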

\begin{proof}
We will show that two strings $s, s' \in L(G)$ are related in $match_{\Sigma_o}$ if and only if they are related in $eq(G\ ||\ H)$.
\begin{align*}
&(s,s') \in match_{\Sigma_o}\\
 &\iff (s \equiv s' \mod eq(G)) \wedge (P_{\Sigma_o}(s) \equiv P_{\Sigma_o}(s') \mod eq(H)) \tag*{by Definition~\ref{def:matchRelation}}\\
&\iff  \delta_G(q_0,s) = \delta_G(q_0, s') \wedge \delta_H(A_0,P_{\Sigma_o}(s)) = \delta_H(A_0,P_{\Sigma_o}(s')) \tag*{by Definition~\ref{def:equiresponseRelation}}\\
&\iff \delta_{G\ ||\ H}(\frac{q_0}{A_0},s) =  \delta_{G\ ||\ H}(\frac{q_0}{A_0},s') \tag*{by Definition~\ref{def:parallelComposition}}\\
&\iff s \equiv s' \mod eq(G\ ||\ H)\tag*{by Definition~\ref{def:equiresponseRelation}}
\end{align*}
\qed
\end{proof}
Finally, we can build on Lemma~\ref{lem:matchAndEquiresponse} by showing that if $match_{\Sigma_o}$ refines $\rho_{G,G'}$, then we can produce $H'$ by relabelling states in $H$ (i.e., the state transition graph of $H'$ is guaranteed be a subgraph of $H$'s state transition graph). This condition implicitly enforces that \texttt{REFINE} must remove states from $G\ ||\ H$ and thereby treat strings that lead to the same state and estimate identically. This is a reasonable restriction since we are considering state-based formulations of opacity and because alternate automaton representations of a language can be used if strings currently related by $match_{\Sigma_o}$ should be treated differently.
\begin{lemma} \label{lem:hPrimeByRelabelling}
Given DFAs $G = (Q,\Sigma,\delta_G,q_0,Q_m)$, a set of observable events $\Sigma_o \subset \Sigma$, and an observer automaton $H = T_{\Sigma_o}(G)$. If $G'$ is a subautomaton of $G$ and $H' = T_{\Sigma_o}(G')$, then $$(\forall s,s' \in L(H))(s \equiv s' \mod eq(H))\ \wedge\ match_{\Sigma_o} \leq \rho_{G,G'} \implies s \equiv s' \mod eq(H').$$
\end{lemma}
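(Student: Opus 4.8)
\emph{Proof plan.} Fix an arbitrary pair $s, s' \in L(H)$ with $s \equiv s' \mod eq(H)$, and assume $match_{\Sigma_o} \leq \rho_{G,G'}$. I would reduce the whole statement to the single claim that, for every $q$,
$q \in \delta_{H'}(q_{0h'}, s)$ if and only if $q \in \delta_{H'}(q_{0h'}, s')$. Once this is established, the two observer estimates $\delta_{H'}(q_{0h'}, s)$ and $\delta_{H'}(q_{0h'}, s')$ are equal as sets, so one is empty (i.e.\ the transition is undefined) exactly when the other is, which by the undefined-convention in Definition~\ref{def:equiresponseRelation} is precisely $s \equiv s' \mod eq(H')$. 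Note that this "$\Leftrightarrow$" subsumes the case distinction on whether $s$ or $s'$ lies in $L(H')$, so no separate case analysis is needed.

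Before the main argument I would record the structural facts the proof rests on. By the NFA-to-DFA construction underlying $T_{\Sigma_o}$, for every $w \in \Sigma_o^*$ we have $\delta_H(q_{0h}, w) = \{\, \delta_G(q_0, u) : u \in L(G),\ P_{\Sigma_o}(u) = w \,\}$ and likewise $\delta_{H'}(q_{0h'}, w) = \{\, \delta_{G'}(q_0, u) : u \in L(G'),\ P_{\Sigma_o}(u) = w \,\}$; since $G' \sqsubseteq G$, Definition~\ref{def:subautomaton} lets me rewrite the latter with $\delta_G$ in place of $\delta_{G'}$. I would also use, exactly as in the proof of Lemma~\ref{lem:matchAndEquiresponse}, that $L(G\ ||\ H) = L(G)$ and that a string $u \in L(G)$ reaches the state $\frac{\delta_G(q_0,u)}{\delta_H(q_{0h},\,P_{\Sigma_o}(u))}$ in $G\ ||\ H$.

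For the key equivalence, take $q \in \delta_{H'}(q_{0h'}, s)$ (the converse inclusion is symmetric). By the description above there is $u \in L(G')$ with $P_{\Sigma_o}(u) = s$ and $\delta_G(q_0, u) = q$. Since $u \in L(G') \subseteq L(G)$, $q$ also lies in $\delta_H(q_{0h}, s)$, and $s \equiv s' \mod eq(H)$ gives $\delta_H(q_{0h}, s) = \delta_H(q_{0h}, s')$, so $q \in \delta_H(q_{0h}, s')$; hence there is $u' \in L(G)$ with $P_{\Sigma_o}(u') = s'$ and $\delta_G(q_0, u') = q$. Both $u$ and $u'$ reach the same state $\frac{q}{\delta_H(q_{0h}, s)} = \frac{q}{\delta_H(q_{0h}, s')}$ in $G\ ||\ H$, so $(u, u') \in eq(G\ ||\ H)$, which by Lemma~\ref{lem:matchAndEquiresponse} equals $match_{\Sigma_o}$; the hypothesis $match_{\Sigma_o} \leq \rho_{G,G'}$ then yields $(u, u') \in \rho_{G,G'}$. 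Because $u \in L(G')$, Definition~\ref{def:rhoRelation} forces $u' \in L(G')$ (the other two disjuncts are impossible, $u$ being a witness of membership in $L(G')$), so $u'$ witnesses $q \in \delta_{H'}(q_{0h'}, s')$, which proves the claim.

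The proof has no deep obstacle; the effort is bookkeeping. The one step that deserves care is the \emph{transport of a witness}: using the equality $\delta_H(q_{0h}, s) = \delta_H(q_{0h}, s')$ to convert a string $u$ projecting to $s$ that lands on plant state $q$ into a companion string $u'$ projecting to $s'$ that lands on the same $q$, so that $(u,u')$ falls in one cell of $eq(G\ ||\ H) = match_{\Sigma_o}$ and the refinement hypothesis applies. The other thing to keep straight is that the conclusion $s \equiv s' \mod eq(H')$ must be read with the undefined-convention of Definition~\ref{def:equiresponseRelation}, so that the set equality of the two estimates — the case where both are empty included — is exactly what is required.
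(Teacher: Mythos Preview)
Your proof is correct and rests on the same core mechanism as the paper's: transport a witness $u$ projecting to $s$ into a companion $u'$ projecting to $s'$ that lands on the same plant state, observe that both reach the same state in $G\ ||\ H$, invoke Lemma~\ref{lem:matchAndEquiresponse} to place $(u,u')$ in $match_{\Sigma_o}$, and then apply the refinement hypothesis to force $u' \in L(G')$. The organization, however, differs. The paper argues by contradiction and first establishes the disjoint decomposition $\delta_H(A_0,s) = \delta_{H'}(A_0,s) \cup Q_\rho(s)$ (disjointness itself proved by a nested contradiction using exactly your witness-transport step); from $s \not\equiv s' \bmod eq(H')$ together with $\delta_H(A_0,s)=\delta_H(A_0,s')$ it then deduces $Q_\rho(s) \neq Q_\rho(s')$ and picks a state in the difference to reach the same contradiction you reach directly. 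Your element-wise argument is shorter, handles the undefined case uniformly, and avoids both the auxiliary set $Q_\rho$ and the two layers of contradiction. What the paper's longer route buys is the explicit relabelling formula $\delta_{H'}(A_0,s) = \delta_H(A_0,s) \setminus Q_\rho(s)$, which it leans on afterwards to justify that the \texttt{REFINE} algorithm need only relabel estimates rather than recompute them.
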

\begin{proof}
The proof proceeds by contradiction. Assume that $(match_{\Sigma_o} \leq \rho_{G,G'})$ and that there exists strings $s$ and $s'$ in $L(H)$ such that $s \equiv s' \mod eq(H)$ but $s \not\equiv s' \mod eq(H')$.
\begin{align} \label{eq:sAndSPrimeCongruent}
s \equiv s' \mod eq(H) \implies &\delta_H(A_0,s) = \delta_H(A_0,s')
\end{align}
We can enumerate exactly those states that appear in $\delta_H(A_0,s)$, and therefore that appear in $\delta_H(A_0,s')$ as well.
\begin{align*}
\delta_H(A_0,s) = &\{q \in Q\ |\ (\exists t \in L(G))(P_{\Sigma_o}(t) = s)(\delta_G(q_0,t) = q)\}\\
= & \{q \in Q\ |\ (\exists t \in L(G'))(P_{\Sigma_o}(t) = s)(\delta_G(q_0,t) = q)\} \cup\\
& \{q \in Q\ |\ (\exists t \in L(G) \setminus L(G'))(P_{\Sigma_o}(t) = s)(\delta_G(q_0,t) = q)\}\\
= &\delta_{H'}(A_0,s) \cup \{q \in Q\ |\ (\exists t \in L(G) \setminus L(G'))(P_{\Sigma_o}(t) = s)(\delta_G(q_0,t) = q)\}
\end{align*}
For notational convenience, we will denote the second term as
\begin{align} \label{eq:QR}
Q_\rho(s) \coloneqq \{q \in Q\ |\ (\exists t \in L(G) \setminus L(G'))(P_{\Sigma_o}(t) = s)(\delta_G(q_0,t) = q)\}.
\end{align}
These sets of states are disjoint, which can be shown by contradiction. Assume that there exists a state $q \in Q$ that belongs to the intersection of these two sets.
\begin{align*}
q \in \delta_{H'}(A_0,s) \cap Q_\rho(s) \implies &(\exists t \in L(G'), t' \in L(G) \setminus L(G'))\\
&(\delta_G(q_0,t)=\delta_G(q_0,t')=q)\\
&(P_{\Sigma_o}(t) = P_{\Sigma_o}(t') = s)\tag*{definition of $\delta_{H'}(A_0,s)$ and $Q_\rho(s)$}\\
\implies & (\exists t \in L(G'), t' \in L(G) \setminus L(G'))\\
&(\delta_G(q_0,t)=\delta_G(q_0,t')=q)\\
&(\delta_H(A_0,P_{\Sigma_o}(t)) = \delta_H(A_0,P_{\Sigma_o}(t'))) \tag*{since $P_{\Sigma_o}(t) = P_{\Sigma_o}(t')$}\\
\implies &(\exists t \in L(G'), t' \in L(G) \setminus L(G'))\\
&(t \equiv t' \mod eq(G))\\
&(P_{\Sigma_o}(t) \equiv P_{\Sigma_o}(t') \mod eq(H)) \tag*{by Definition~\ref{def:equiresponseRelation}}\\
\implies &(\exists t \in L(G'), t' \in L(G) \setminus L(G'))\\
&((t,t') \in match_{\Sigma_o}) \tag*{by Definition~\ref{def:matchRelation}}\\
\implies &(\exists t \in L(G'), t' \in L(G) \setminus L(G'))\\
&(t,t' \in L(G') \vee t,t' \in L(G) \setminus L(G')) \tag*{since $match_{\Sigma_o} \leq \rho_{G,G'}$}
\end{align*}
Since $L(G')$ and $L(G) \setminus L(G')$ are disjoint, this is a contradiction. We can therefore say that $\delta_{H'}(A_0,s) \cap Q_\rho(s) = \emptyset$ and that
\begin{align} \label{eq:removeQR}
\delta_{H'}(A_0,s) = \delta_{H}(A_0,s) \setminus Q_\rho(s).
\end{align}
For the sake of contradiction, we assumed that $s \not\equiv s' \mod eq(H')$. By Definition~\ref{def:equiresponseRelation} this implies that $\delta_{H'}(A_0,s) \neq \delta_{H'}(A_0,s')$. From \eqref{eq:removeQR} we know for $s$ and $s'$ that
\begin{align*}
\delta_{H'}(A_0,s) &= \delta_{H}(A_0,s) \setminus Q_\rho(s)\\
\delta_{H'}(A_0,s') &= \delta_{H}(A_0,s') \setminus Q_\rho(s')
\end{align*}
and since we have reasoned that $\delta_{H}(A_0,s) =\delta_{H}(A_0,s')$, \eqref{eq:sAndSPrimeCongruent}, this implies that $Q_\rho(s) \neq Q_\rho(s')$.
\begin{align*}
Q_\rho(s) \neq Q_\rho(s') \implies &(\exists q^\star \in Q)(q^\star \in Q_\rho(s))(q^\star \not\in Q_\rho(s')) \tag*{WLOG, since $eq(H)$ is symmetric}\\
\implies &(\exists t \in L(G) \setminus L(G'))(P_{\Sigma_o}(t) = s)(\delta_G(q_0,t) = q^\star)\\
&(\not\exists t' \in L(G) \setminus L(G'))(P_{\Sigma_o}(t') = s')(\delta_G(q_0,t') = q^\star) \tag*{by \eqref{eq:QR}}\\
\implies &(\exists t \in L(G) \setminus L(G'))(P_{\Sigma_o}(t) = s)(\delta_G(q_0,t) = q^\star)\\
&(\exists t' \in L(G'))(P_{\Sigma_o}(t') = s')(\delta_G(q_0,t') = q^\star) \tag*{since $L(G') \cap L(G) \setminus L(G') = \emptyset$}\\
\implies &(\exists t \in L(G) \setminus L(G'), t' \in L(G'))(P_{\Sigma_o}(t) = s)(P_{\Sigma_o}(t') = s')\\
&(t \equiv t' \mod eq(G)) \tag*{by Definition~\ref{def:equiresponseRelation}}\\
\implies &(\exists t \in L(G) \setminus L(G'), t' \in L(G'))\\
&(t \equiv t' \mod eq(G\ ||\ H)) \tag*{since $s \equiv s' \mod eq(H)$}\\
\implies &(\exists t \in L(G) \setminus L(G'), t' \in L(G'))\\
&((t,t') \in match_{\Sigma_o} \tag*{by Lemma~\ref{lem:matchAndEquiresponse}}\\
\implies &(\exists t \in L(G) \setminus L(G'), t' \in L(G'))\\
&(t,t' \in L(G') \vee t,t' \in L(G) \setminus L(G')) \tag*{since $match_{\Sigma_o} \leq \rho_{G,G'}$}
\end{align*}
Again, since we know that $L(G')$ and $L(G) \setminus L(G')$ are disjoint, this is a contradiction. This implies that $Q_\rho(s) = Q_\rho(s')$, which further implies that $s \equiv s' \mod eq(H')$. We have therefore proved that $$(\forall s,s' \in L(H))(s \equiv s' \mod eq(H))\ \wedge\ match_{\Sigma_o} \leq \rho_{G,G'} \implies s \equiv s' \mod eq(H').$$
\qed
\end{proof}
Lemma~\ref{lem:hPrimeByRelabelling} guarantees us that for any state in $H$---cell in the partition defined by $eq(H)$---we can produce the corresponding state in $H'$ by relabelling that state. It goes beyond this too, by telling us that this relabelling removes exactly those states in $Q_\rho$ from the state label in $H$.

\subsection{Proof of correctness for the \texttt{REFINE} algorithm}
As a first point, the \texttt{REFINE} algorithm is guaranteed to terminate because it works by removing and relabelling states in the input DFA, $G\ ||\ H$. For the first loop, the number of states and transitions to be inspected is finite and each state and transition can only be inspected once, therefore the loop terminates. For the second loop, there are a finite number of states to assess for accessibility and no changes are made to the structure of $M$. For the third loop, there are a finite number of states to relabel and relabelling a states in $M$ never requires relabelling another state in $M$, therefore the loop terminates. Finally, taking the accessible portion of $M$ requires assessing the accessibility of a finite number of states, so this operation also terminates.

Secondly, we claim that the \texttt{REFINE} algorithm uses the subobserver relationship to calculate $G'\ ||\ H'$ without computing $H'$ explicitly. This is formalized as Theorem~\ref{the:correctnessOfRefine}.
\begin{theorem} \label{the:correctnessOfRefine}
Given the parallel composition $G\ ||\ H$ and list of states $\Delta$ where: $G = (Q,\Sigma,\delta_G,q_0,Q_m) $ is a DFA; $H := T_{\Sigma_o}(G)$ is an observer automaton; and $\Delta$ is a list of states in $G\ ||\ H$. Denote by $G'$ the automaton that results in the first elements of $G\ ||\ H$ once the states in $\Delta$ are made inaccessible. Then the \texttt{REFINE} algorithm produces $G'\ ||\ H'$ where $H' := T_{\Sigma_o}(G')$.
\end{theorem}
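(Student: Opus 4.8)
The plan is to show that \texttt{REFINE} outputs an automaton whose state set, transition structure, initial state, and marking all coincide with those of $G'\ ||\ H'$. I would split this into two halves: the ``plant half'' (the first elements $q$ of states $\frac{q}{A}$) and the ``estimate half'' (the second elements $A$). For the plant half, I would appeal directly to Lemma~\ref{lem:produceGPrimeSubautomaton}: the first loop of \texttt{REFINE} removes exactly $\Delta$ together with all states that reach $\Delta$ via uncontrollable events (the standard controllability-closure of $\Delta$), and by that lemma the first elements of the surviving states form a subautomaton of $G$, which is precisely the $G'$ named in the statement. Since relabelling in the third loop only touches second elements, and since $L(G\ ||\ H)=L(G)$, the first-coordinate projection of the output of \texttt{REFINE} is language-equivalent to $G'$ and structurally identical to it. The precondition that $G\ ||\ H$ and $G$ have isomorphic transition diagrams (which the paper promises to justify in Section~\ref{sec:stateSplitting}) is what lets me move freely between ``states of $G\ ||\ H$'' and ``states of $G$'' here.

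For the estimate half, the key is to identify the set $A_r(A)$ that \texttt{REFINE} subtracts from each estimate $A$ with the set $Q_\rho$ from Lemma~\ref{lem:hPrimeByRelabelling}. Concretely: a string $t\in L(G)$ survives into $L(G')$ iff the state $\frac{q}{A}$ it leads to in $G\ ||\ H$ survives the first loop (this is where isomorphism and Lemma~\ref{lem:stateInGH} are used — $\frac{q}{A}$ accessible means some such $t$ exists leading to $q$ with projection reaching $A$). Hence the second loop's $A_r(A)=\{q : \frac{q}{A}\text{ became inaccessible}\}$ is exactly $\{q : q=\delta_G(q_0,t)\text{ for some }t\in L(G)\setminus L(G')\text{ with }P_{\Sigma_o}(t)\text{ reaching }A\}$, which is $Q_\rho(P_{\Sigma_o}(t))$ in the notation of equation~\eqref{eq:QR}. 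I would then invoke Lemma~\ref{lem:matchAndEquiresponse} (so that strings leading to the same $\frac{q}{A}$ are exactly those related by $match_{\Sigma_o}$) together with the hypothesis $match_{\Sigma_o}\le\rho_{G,G'}$ — which holds here because \texttt{REFINE} removes whole states, so any two strings leading to the same state-estimate pair are treated identically — to apply Lemma~\ref{lem:hPrimeByRelabelling} and conclude that relabelling $A\mapsto A\setminus A_r(A)=A\setminus Q_\rho$ sends each state of $H$ to the corresponding state of $H'=T_{\Sigma_o}(G')$. Taking $Ac(M)$ at the end discards the now-unreachable debris, giving exactly the accessible state set of $G'\ ||\ H'$.

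The remaining piece is to check that transitions, the initial state, and the marking survive. Transitions between surviving states are inherited from $G\ ||\ H$ unchanged by all three loops, and the isomorphism assumption ensures they agree with the parallel-composition transition function of $G'\ ||\ H'$; the initial state $\frac{q_0}{q_{0h}}$ is never removed (it is relabelled to $\frac{q_0}{q_{0h'}}$, matching $H'$'s initial state since $q_{0h'}=\epsilon R_{G'}(q_0)\subseteq\epsilon R_G(q_0)=q_{0h}$ by Lemma~\ref{lem:epsilonReach}, and more precisely equals $q_{0h}\setminus Q_\rho(\epsilon)$). For marking, a state $\frac{q}{A}$ of $G\ ||\ H$ is marked iff $q\in Q_m$ and $A\subseteq Q_m$ (characteristic marking), and I must verify the relabelled state is marked in $G'\ ||\ H'$ iff $q\in Q'_m=Q_m\cap Q'$ and $A\setminus Q_\rho\subseteq Q_m$ — a short argument since $Q_\rho\subseteq Q$ and removing states from an estimate cannot destroy the ``all secret'' property, while any newly-``all secret'' estimate $A\setminus Q_\rho$ genuinely corresponds to a marked state of $H'$. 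I expect the main obstacle to be the bookkeeping that connects the operational, state-removal description of $A_r(A)$ in Algorithm~\ref{alg:implicitCalculations} to the declarative set $Q_\rho$: one must be careful that \emph{every} string witnessing membership in $Q_\rho$ corresponds to a genuinely-removed accessible state (not merely to an inaccessible one that was never in $M$ to begin with), and conversely — and that this correspondence is exactly what the isomorphism hypothesis and Lemma~\ref{lem:stateInGH} are there to supply.
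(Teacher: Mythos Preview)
Your proposal is correct and follows essentially the same approach as the paper: both arguments use Lemma~\ref{lem:produceGPrimeSubautomaton} for the plant component, establish $match_{\Sigma_o}\le\rho_{G,G'}$ via Lemma~\ref{lem:matchAndEquiresponse} (because \texttt{REFINE} removes whole states of $G\ ||\ H$), and then invoke Lemma~\ref{lem:hPrimeByRelabelling} to justify that relabelling estimates by $A\mapsto A\setminus A_r(A)$ yields $H'$. Your organization by ``plant half / estimate half'' and your explicit identification of $A_r(A)$ with the set $Q_\rho$ of equation~\eqref{eq:QR} are slightly cleaner than the paper's loop-by-loop presentation, and your additional checks on the initial state and marking go a bit beyond what the paper's proof explicitly verifies, but the substance is the same.
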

\begin{proof}
We show that, given $G\ ||\ H$, where $H$ is the transformed automaton $T_{\Sigma_o}(G)$, and $\Delta$, a list of states to remove from $G\ ||\ H$ that produces $G'$ in the first elements of the states in $G$, the \texttt{REFINE} algorithm constructs $G'\ ||\ H'$.

We assume that $\Delta$ is not equal to the set of states in $G\ ||\ H$, otherwise $G'$ is the null automaton and we simply need to return $G'\ ||\ H' = null$. For the case where $\Delta$ is not equal to the set of states in $G\ ||\ H$, we reason about the elements of the states separately, since these correspond to the states of $G'$ and $H'$.

First, we argue that the first loop removes the correct states from $M$ to make the states in $\Delta$ inaccessible. For the purposes of this proof, we define a new set $\Delta_Q^{All}$ as the set of all states that were added to $\Delta_Q$. If we wanted to include this set in our implementation of \texttt{REFINE}, we could an operation after Line~\ref{alg:deltaToDeltaQ} and after Line~\ref{alg:uncontrollableToDeltaQ} where the state that was added to $\Delta_Q$ is also added to $\Delta_Q^{All}$. Since states are never removed from $\Delta_Q^{All}$, it will contain all of the states that were at one point in $\Delta_Q$ and were removed from the automaton $M$. This allows us to reason that after the first loop:
\begin{equation*}
\frac{q}{A} \in \Delta_Q^{All} \iff (\frac{q}{A} \in \Delta)\ \vee (\frac{q}{A}\text{ leads uncontrollably to a state in }\Delta_Q^{All}.)
\end{equation*} 
\begin{enumerate}
\item $\implies$. States are in $\Delta_Q^{All}$ for only two reasons. The first is if they are added to $\Delta_Q$ at Line~\ref{alg:deltaToDeltaQ}, in which case they are in $\Delta$. The second is if they are added to $\Delta_Q$ at Line~\ref{alg:uncontrollableToDeltaQ}, in which case they lead uncontrollably to a state in $\Delta_Q^{All}$. So every state in $\Delta_Q^{All}$ meets one of the two conditions in the consequent.
\item $\impliedby$. First, Line~\ref{alg:deltaToDeltaQ} makes $\Delta_Q$ equal to $\Delta$ before beginning the loop, which means that all states in $\Delta$ are also in $\Delta_Q^{All}$. Second, for any state $\frac{q}{A}$ that leads uncontrollably to a state $\frac{q'}{A'}$ in $\Delta_Q^{All}$, there is an uncontrollable transition that leads from $\frac{q}{A}$ to $\frac{q'}{A'}$. When the iteration for the state $\frac{q'}{A'}$ is reached in the loop, that transition will be identified as uncontrollable and $\frac{q}{A}$ will be added to $\Delta_Q$ at Line~\ref{alg:uncontrollableToDeltaQ} and therefore will be added to $\Delta_Q^{All}$ as well.

Since states are never removed from $\Delta_Q^{All}$, these two lines of reasoning guarantee that all states in $\Delta$ are in $\Delta_Q^{All}$ and that all states that lead uncontrollably to a state in $\Delta_Q$ are in $\Delta_Q^{All}$ as well.
\end{enumerate}
After the first loop is complete, $\Delta_Q^{All}$ contains exactly those states that must be removed to make the states in $\Delta$ inaccessible. These are exactly the states that are removed from $M$, since any state that appears in $\Delta_Q^{All}$ was in $\Delta_Q$ and was therefore removed from $M$ at Line~\ref{alg:removeStateInDeltaQ}. \texttt{REFINE} doesn't remove any other states from $M$. Therefore the only states that have been removed from $M$ are those that were necessary to make the set of states in $\Delta$ inaccessible. We denote the first element of the remaining states in $M$ as an automaton $G'$.

Second, because we created $G'$ by removing states from $M = G\ ||\ H$, we know by Lemma~\ref{lem:produceGPrimeSubautomaton} that $G' \sqsubseteq G$. Since $G' \sqsubseteq G$, $H \coloneqq T_{\Sigma_o}(G)$, and $H' \coloneqq T_{\Sigma_o}(G')$, we know from Theorem 1 that $H' \tilde{\sqsubseteq} H$. We argue that because $H' \tilde{\sqsubseteq} H$, the estimates related to some strings need to be updated and that this can be accomplished by relabelling the states. We also argue that because the strings removed from $L(G)$ to produce $L(G')$ were determined by a set of states in $G\ ||\ H$, we have that $match_{\Sigma_o} \leq \rho_{G,G'}$ by Lemma~\ref{lem:matchAndEquiresponse} and therefore relabelling the estimates from $H$ is all that is required to produce $H'$ (Lemma~\ref{lem:hPrimeByRelabelling}).

The second loop in \texttt{REFINE} produces a list of states $A_r(A)$ for each estimate $A$. We claim that a state $q$ will be added to $A_r(A)$ if and only if $q$ was in the estimate $A$ to begin with and there are no longer any strings in $L(G')$ that lead to $q$ and which share projections with strings leading to the other states in $A$:
$$q \in A_r(A) \iff (q \in A) \wedge ((\forall s \in L(G)\ |\ \delta_G(q_0,s) = q \wedge \delta_H(A_0,P_{\Sigma_o}(s)) = A)(s \notin L(G'))).$$
We will reason about the two directions separately.
\begin{enumerate}
\item $\implies$. If $q \in A_r(A)$ then it was added at Line~13 because the state $\frac{q}{A}$ was in $G\ ||\ H$ but no longer accessible in $M$ at Line~11. Because the state $\frac{q}{A}$was accessible in $G\ ||\ H$, Lemma~2 tells us that $q \in A$, satisfying the consequent's first expression.

For the consequent's second expression, we will proceed by contradiction. Assume for the sake of contradiction that there exists a string $s \in L(G)$ such that $$\delta_G(q_0,s) = q \wedge \delta_H(A_0,P_{\Sigma_o}(s)) = A$$ and that this string $s$ is also in $L(G')$. Since $s$ is in $L(G')$ and since the first elements of states remaining in $M$ after the first loop exactly reflect $G'$, this means that the state to which $s$ leads has not been removed from $M$ and hence $\frac{q}{A}$ remains accessible in $M$ at Line~12. We therefore know that $q$ would not have been added to $A_r(A)$ at Line~13, which is a contradiction. Therefore $q \in A_r(A)$ implies the consequent's second expression as well.

\item $\impliedby$. Because $(q \in A) \wedge (\forall s \in L(G)\ |\ \delta_G(q_0,s) = q \wedge \delta_H(A_0,P_{\Sigma_o}(s)) = A)$ we have that the state $\frac{q}{A}$ was accessible in $G\ ||\ H$. The last expression in the antecedent tells us, however, that any string $s$ that led to $\frac{q}{A}$ in $G\ ||\ H$ is not in $L(G')$. Since the first elements of states remaining in $M$ after the first loop exactly reflect $G'$, this means that $\frac{q}{A}$ is no longer accessible in $M$ at Line~12. The state $q$ is therefore added to $A_r(A)$ at Line~13. Since states are never removed from $A_r(A)$, this guarantees that $q \in A_r(A)$.
\end{enumerate}

Third, the third loop correctly updates all of the estimates in $M$ because all states in $M$ are visited and because all states in $M$ sharing an estimate must be updated in the same way (Lemma~\ref{lem:hPrimeByRelabelling}).

To conclude, after the \texttt{REFINE} algorithm has completed its work, the resulting automaton $M$ is equal to $G'\ ||\ H'$ such that $$\delta_M(\frac{q_0}{A_0},s) = \frac{q}{A} \iff \delta_{G'}(q_0,s) = q \wedge \delta_{H'}(A_0,P_{\Sigma_o}(s)) = A.$$
\qed
\end{proof}

\subsection{Time complexity analysis of the \texttt{REFINE} algorithm}
We tie the asymptotic time complexity of the \texttt{REFINE} algorithm to the size of the automaton $G\ ||\ H$ with $N$ states, an alphabet of size $|\Sigma|$, and, because it is a DFA, an upper bound of $N \times |\Sigma|$ transitions.

Line~\ref{alg:denoteByM} is a memory operation where we denote the input $G\ ||\ H$ by $M$. This happens in constant time. Line~\ref{alg:initializeAr} is also a memory operation, where the set of states to remove from at most $N$ estimates are initialized to $\emptyset$. The complexity class for this operation is $\mathcal{O}(N)$. Line~\ref{alg:deltaToDeltaQ} is another memory operation where we denote the input $\Delta$ by $\Delta_Q$. This happens in constant time.

The first loop, starting at Line~\ref{alg:buildDeltaQStart}, will in the worst case examine every transition in $G\ ||\ H$ once. Practically speaking, transitions could be stored in a sorted list with all uncontrolled transitions at the start of the list. For each transition, at most four constant time operations are performed: verifying if the transition is uncontrolled; adding the origin state to $\Delta_Q$; and removing the destination state from both $\Delta_Q$ and $M$. The complexity class for this loop is $\mathcal{O}(N \times |\Sigma|)$. 

The second loop, starting at Line~\ref{alg:checkInaccessibleStatesStart}, will examine every state in $G\ ||\ H$ once. For each state, at most two constant time operations are performed: checking if the state is inaccessible; and adding the first element of the state to $A_r(A)$. The complexity class for this loop is $\mathcal{O}(N)$.

The third loop, starting at Line~\ref{alg:accessibleStart} will examine every state in $G\ ||\ H$ once. For each state, exactly one constant time operation is performed: relabelling the state's estimate by removing the previously stored set $A_r(A)$ from the current estimate. The complexity class for this loop is $\mathcal{O}(N)$.

Line~\ref{alg:onlyAccessible} requires returning the accessible portion of $M$. Practically, the inaccessible components of $M$ can be removed during the second loop, and so this is a constant time operation.

Taken all together, the complexity class for \texttt{REFINE} is $\mathcal{O}((N \times 3) + (N \times |\Sigma|))$, which simplifies to $\mathcal{O}(N \times |\Sigma|)$. If we consider that the number of states in an automaton is generally larger than the number of events, this further simplifies to $\mathcal{O}(N)$.

\subsection{Discussion}
The subobserver relationship is a more general form of the subautomaton relationship. We showed that the subobserver relationship captures how the observer automaton changes to reflect evolutions in the plant. Specifically, the subobserver relationship allows the parallel composition of a plant and an observer automaton for that plant, $G\ ||\ H$, to be refined without recomputing the observer. Instead, our method removes transitions in the parallel composition if and only if they must be disabled by supervisory control and updates the observer's estimates. The \texttt{REFINE} algorithm implements this procedure and its asymptotic time complexity is linear in the number of states of $G\ ||\ H$.

Although the \texttt{REFINE} algorithm was developed within the framework of the opacity control problem, it more generally produces the joint behaviour of plant and observer as the plant evolves over time. In the remainder of this paper we consider the opacity control problem exclusively, but we note that this approach is applicable to any DES problem where evolving discrete-event processes are observed including online control, dynamic discrete-event systems, and decentralized control.

\section{Synthesizing Opacity-Enforcing Supervisors}
Our opacity control problem formulation is similar to that from other works (see Problem~\ref{prob:opacitycontrolproblem}). We assume that all controllable events are observable, $\Sigma_c \subseteq \Sigma_s$, and that the adversary observes a subset of the events that the supervisor observes, $\Sigma_a \subseteq \Sigma_s$. These assumptions allow us to use Dubreil et al.'s reduction of the general opacity control problem to the opacity control problem with full observation, i.e.\ $\Sigma_s = \Sigma$~\cite[Proposition~5]{Dubreil2010}. We assume a supervisor-aware adversary, which implies that the adversary's view will evolve with $G' = \mathcal{S}/G$. For simplicity we enforce current-state opacity.
\begin{figure}[!htb]
\centering
\subfloat[The na\"{i}ve method]{
\includegraphics[width=0.4\textwidth]{./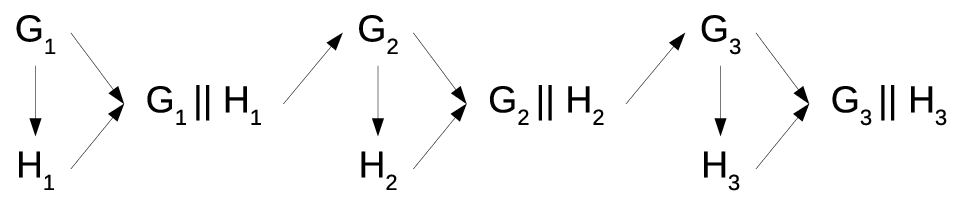}
\label{fig:naive}}\qquad
\subfloat[Condensed state estimates~\cite{Dubreil2010}]{
\includegraphics[width=0.4\textwidth]{./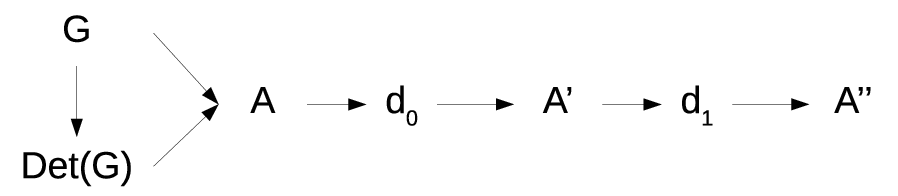}
\label{fig:dubreilmethod}}\\
\subfloat[Augmented I-observer~\cite{Tong2018a}]{
\includegraphics[width=0.4\textwidth]{./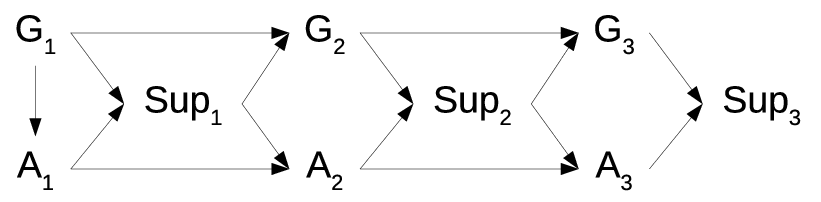}
\label{fig:tongmethod}}\qquad
\subfloat[The \texttt{SYNTHESIZE} algorithm]{
\includegraphics[width=0.4\textwidth]{./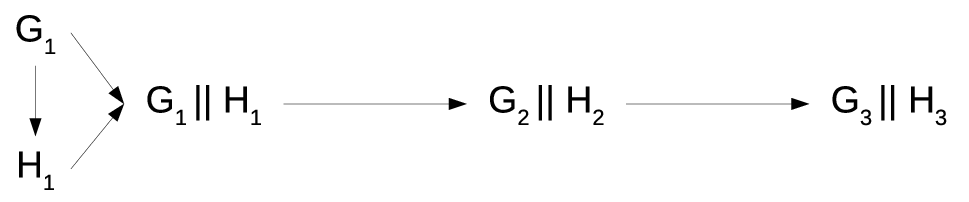}
\label{fig:proposed}}
\caption{Comparing the calculations and data structures for each approach.}
\end{figure}

Recalling approaches to the opacity control problem, the na{\"{i}}ve, computationally inefficient solution is to apply supervisory control and calculate observers iteratively (Figure~\ref{fig:naive})~\cite{Dubreil2010}. Condensed state estimates (Figure~\ref{fig:dubreilmethod}) require alternately constructing partial maps and automata; since the only states in a condensed state estimate are those that were reached by the last event observable to the adversary, ``loosing traces'' are calculated at each iteration to determine which states will reveal a secret~\cite{Dubreil2010}. The augmented I-observer (Figure~\ref{fig:tongmethod}) solves a more general form of the problem but requires calculating the supremal G-opaque sublanguage, synthesizing a supervisor and refining both the plant and augmented I-observer at each step. Tong et al.\ avoid this complication by restricting themselves to the case of an adversary who is unaware of the supervisor~\cite{Tong2018a}.

Our method uses the subobserver property to calculate $G'\ ||\ H'$ directly from $G\ ||\ H$, refining the parallel composition of plant and adversary view (Figure~\ref{fig:proposed}). Advantages of this method include: not producing any intermediary structures or languages; capturing the adversary's beliefs better than condensed state estimates, thus not requiring ``loosing traces'' to be calculated; and avoiding unnecessary calculations compared to the augmented I-observer with a supervisor-aware adversary.

\subsection{The \texttt{SYNTHESIZE} algorithm} \label{sec:proposedAlgorithm}
As in Section~\ref{sec:updatingObservers}, we use the parallel composition of $G$ and $H := T_{\Sigma_a}(G)$ to track the plant's evolving structure along with the adversary's beliefs about the plant. Importantly, we mark states in $H$ to capture the scenarios in which no non-secret state can be confused by the adversary with a true secret state. With a traditional observer automaton, a state is marked in $H$ when one of the states that it contains is itself marked in $G$; in our work, by contrast, a state in $H$ is marked when all the states it contains are themselves marked in $G$, as described in Section~\ref{sec:observingDES}. Thus, in our work, if $H$ -- the observer -- reaches a marked state, then a string has occurred in $G$ -- the plant -- that allows the adversary to be sure that a secret state has been reached. This marking carries through to $G\ ||\ H$, where marked states must be made inaccessible to produce an opacity-enforcing supervisor. 
\begin{algorithm}[ht]
\KwData{Plant $G$; secret states $Q_m$ and non-secret states $Q \setminus Q_m$; alphabet visible to the adversary $\Sigma_a \subseteq \Sigma$ and alphabet controllable by supervisor $\Sigma_c \subseteq \Sigma$.}
\KwResult{Supervisor $\mathcal{S}$ such that $\mathcal{S}/G$ is current-state opaque with respect to $Q_m$ and $\Sigma_a$.}
\tcc{Create the observer automaton and the parallel composition.}
$H = T_{\Sigma_a}(G)$\; \label{alg:computeGa}
$M =\ G\ ||\ H$\; \label{alg:gCrossga}
\While{$M$ has marked states \label{alg:markedStatesBegin}}
{
\tcc{$\Delta$ is the list of states to make inaccessible in $M$, i.e.\ the states that cause opacity to not be enforced.}
Set $\Delta$ equal to the set of marked states in $M$\;
\tcc{Refining $M$ reflects that the adversary changes its view of the plant based on its ability to reason about the policy enforced by the supervisor.}
$M =\ $\texttt{REFINE}($M,\Delta$)\;
\label{alg:refineComposition}
\tcc{Mark any states whose new estimates contain only secret states.}
\ForEach{State $\frac{q}{A}$ in $M$ \label{alg:checkAdversaryBeliefs}}
{
\If{$A \subseteq Q_m$}
{
Mark the state $\frac{q}{A}$ in $M$\;
}
}
} \label{alg:markedStatesEnd}
\tcc{If $M$ is an empty automaton, then there is no valid supervisor for $G$ that is able to enforce current-state opacity with respect to the adversary's observations $\Sigma_a$.}
$\mathcal{S} = M$\;
\caption{The \texttt{SYNTHESIZE} algorithm}
\label{alg:proposed}
\end{algorithm}

The \texttt{SYNTHESIZE} algorithm (Algorithm~\ref{alg:proposed}) enforces current-state opacity by applying supervisory control as long as the controlled plant composed with the adversary's estimations of the plant contains marked states (Lines~\ref{alg:markedStatesBegin}-\ref{alg:markedStatesEnd}), the adversary's view of the plant is updated (Line~\ref{alg:refineComposition}), and the new composition of plant behaviour and adversarial belief is checked for current-state opacity (Lines~\ref{alg:checkAdversaryBeliefs}-\ref{alg:markedStatesEnd}).

\subsection{Worked example}
We will demonstrate how the \texttt{SYNTHESIZE} algorithm works for a concrete example, beginning with the plant and the initial parallel composition (Figure~\ref{fig:CUMCex1a}). This plant has two marked states, $6$ and $10$, an event that the supervisor cannot control, $\beta$, and events that the adversary cannot observe, $\alpha$ and $\gamma$.
\begin{figure}[htb]
\centering
\subfloat[The original plant, $G$]{
\includegraphics[width=0.4\textwidth]{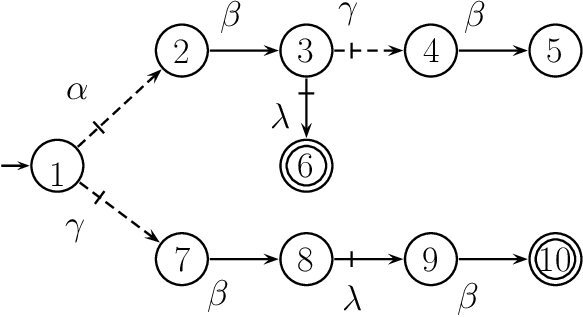}}\qquad
\subfloat[Initial parallel composition, $G\ ||\ H$]{
\includegraphics[width=0.45\textwidth]{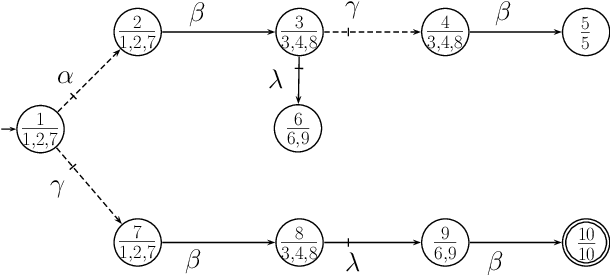}}
\caption{The initial plant and parallel composition for the worked example.}
\label{fig:CUMCex1a}
\end{figure}

The initial parallel composition has a single marked state, $\frac{10}{10}$, which is added to $\Delta$ and passed to \texttt{REFINE}. In order to prevent the state $\frac{10}{10}$ from being reached, we must also mark the state $\frac{9}{6,9}$ in $M$ since it leads uncontrollably to our marked state. Removing these two states from $M$ implies that when the supervisor is in state $\frac{8}{3,4,8}$ it will disable the event $\lambda$ via supervisory control since the transition $\langle \frac{8}{3,4,8}, \lambda, \frac{9}{6,9} \rangle_M$ is no longer defined. Removing these states from $M$, however, means that our supervisor will not allow the plant $G$ to reach state $9$, which will affect the adversary's estimates. Specifically here, when the plant is in state $6$, the adversary will no longer confuse the string in $G$ that led to state $6$ with a string in $G$ that leads to state $9$ since the latter is now inaccessible in the plant. This leads to the new parallel composition in Figure~\ref{fig:CUMCsync2}, which also has one marked state, $\frac{6}{6}$.

Removing the state $\frac{6}{6}$ from $M$ produces the parallel composition in Figure~\ref{fig:CUMCsync3}. This automaton has no marked states, indicating that it can be used as a supervisor to enforce current-state opacity in the plant $G$.
\begin{figure}[htb]
\centering
\subfloat[The second parallel composition, $G'\ ||\ H'$]{
\includegraphics[width=0.45\textwidth]{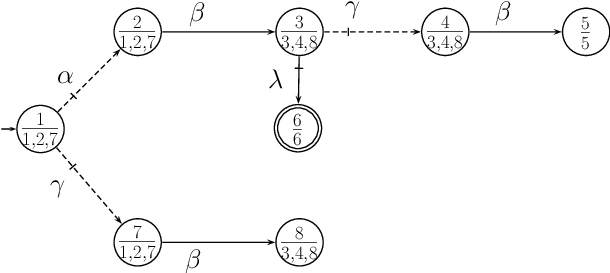}
\label{fig:CUMCsync2}}\qquad
\subfloat[The third parallel composition, $G''\ ||\ H''$]{
\includegraphics[width=0.45\textwidth]{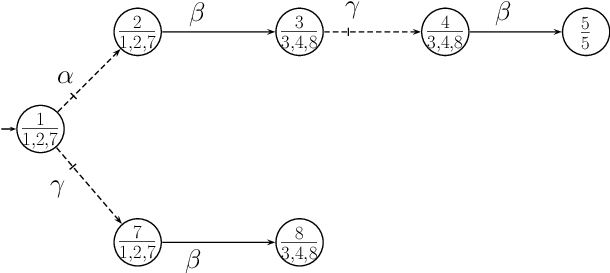}
\label{fig:CUMCsync3}}
\caption{The parallel compositions computed during the worked example.}
\label{fig:CUMCex1b}
\end{figure}

\subsection{The case of state splitting} \label{sec:stateSplitting}
As remarked in Section~\ref{sec:relatingObservers}, we are not guaranteed that applying supervisory control to $G$ will result in a subautomaton of $G$. For example, the observer automaton in Figure~\ref{fig:specNotSubAutEx} illustrates how although the secret states 8 and 9 are repeatedly visited, the adversary is only able to determine that the plant is in a secret state once it sees a fourth $a$. This means that a supervisor can allow state 8 to be visited up to three times and must therefore enact different control patterns for state 8 depending on how many times the plant has visited state 8.
\begin{figure}[b]
\centering
\subfloat[Plant $G$, which requires state splitting.]{
\includegraphics[height=3cm]{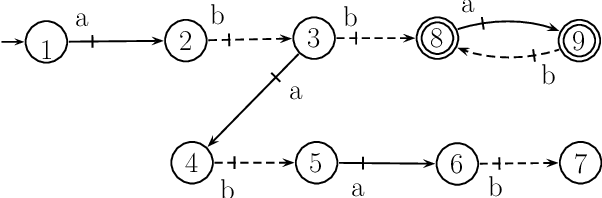}}\\
\subfloat[Observer automaton, $H$.]{
\includegraphics[height=1.25cm]{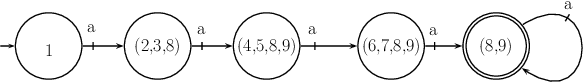}}
\caption{States 8 and 9 in $G$ are secret and only $a$ is observable to the adversary. The observer automaton shows that after observing a fourth $a$, an adversary will know the plant is in a secret state. Note that only $\{8,9\}$ is marked in $H$ because this is the only state whose constituent states are all marked in $G$.}
\label{fig:specNotSubAutEx}
\end{figure}

State splitting is required for opacity enforcement when a state can be visited in the prefixes of strings leading to secret states but only a finite number of times, requiring the supervisor to know which string led to the state for a particular visit~\cite[p.~141]{Cassandras2008}. Note that if a state can be visited an infinite number of times then the supervisor can always enact the same control pattern. Happily, this state splitting occurs implicitly in Line~\ref{alg:gCrossga} of the \texttt{SYNTHESIZE} algorithm when the first parallel composition is constructed (Figure~\ref{fig:specNotSubAutExSync}). Because a state that requires splitting can only be safely visited a finite number of times, it is possible to transform any DFA, $G$, into a language-equivalent DFA, $\hat{G}$, by replicating the original state for each time that it can be safely visited.
\begin{figure}[htb]
\centering
\includegraphics[width=0.75\textwidth]{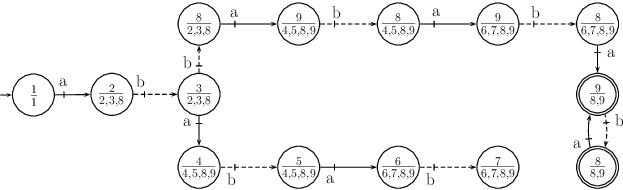}
\caption{Continuing from Figure~\ref{fig:specNotSubAutEx}, the first parallel composition inherently constructs the needed, language-equivalent automaton to represent the plant.}
\label{fig:specNotSubAutExSync}
\end{figure}

In fact, taking only the first element of the states in $G\ ||\ H$'s state transition diagram produces $\hat{G}$'s state transition diagram. Following this initial transformation, we are then guaranteed that $G'$ will be a strict subautomaton of $\hat{G}$ and the \texttt{SYNTHESIZE} algorithm can proceed from there. We note that this state splitting is not an artifact of our proposed method: any method for producing a minimally restrictive opacity-enforcing supervisor from the original plant $G$ must be able to reason about the number of times that such states have been reached. The structure $G\ ||\ H$ embeds this information by splitting states the necessary number of times; it is therefore the smallest memoryless structure that can be used to solve the opacity control problem in the context of a supervisor-aware adversary.

\subsection{Proof of correctness for the \texttt{SYNTHESIZE} algorithm}
First, the \texttt{SYNTHESIZE} algorithm is guaranteed to terminate. As with the \texttt{REFINE} algorithm, termination is guaranteed because the automaton $G\ ||\ H$ has a finite number of states and the \texttt{SYNTHESIZE} algorithm makes states inaccessible until $M$ enforces opacity or is the null automaton. Specifically, if $M$ has any marked states then the algorithm removes states -- never adding states -- to make the marked states inaccessible. Since $M$ has a finite number of states, and the algorithm terminates if $M$ is the null automaton, the algorithm is guaranteed to terminate.

Second, we prove the correctness of \texttt{SYNTHESIZE} through Theorem~\ref{the:correctnessOfSynthesize}. 
\begin{theorem} \label{the:correctnessOfSynthesize}
Given a DFA $G = (Q,\Sigma,\delta_G,q_0,Q_m)$, set of secret states $Q_m$, set of events visible to the adversary $\Sigma_a \subseteq \Sigma$, and set of events controllable by the supervisor $\Sigma_c \subseteq \Sigma$. Then the supervisor $\mathcal{S}$ produced by the \texttt{SYNTHESIZE} algorithm is a correct and maximally-permissive opacity-enforcing supervisor for $G$.
\end{theorem}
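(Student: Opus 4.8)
The plan is to prove the two halves of the claim separately—first that $\mathcal{S}$ is a legal, opacity-enforcing supervisor (\emph{correct}), then that it is the most permissive such supervisor—after first pinning down the invariant that the loop of \texttt{SYNTHESIZE} maintains. Reasoning in terms of the language-equivalent automaton $\hat{G}$ whose state transition diagram is the first-component projection of $G\ ||\ H$ (Section~\ref{sec:stateSplitting}), I would show by induction on the loop iterations that at the start of iteration $i$ the automaton $M$ equals $G^{(i)}\ ||\ H^{(i)}$, where $G^{(i)}$ is a subautomaton of $\hat{G}$ (with $G^{(0)}=\hat{G}$) and $H^{(i)}=T_{\Sigma_a}(G^{(i)})$; the inductive step is exactly Theorem~\ref{the:correctnessOfRefine} applied with $\Delta$ the current set of marked states, together with the observation that the re-marking step changes only which states are marked, not the transition structure or the estimates. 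Termination of the loop was already argued, so this yields a finite sequence ending in a composition $M=\mathcal{S}$ with no marked states.

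For correctness I would first check admissibility: because \texttt{REFINE} propagates inaccessibility backward through every uncontrollable transition (the $\Sigma_{uc}$ branch feeding $\Delta_Q$), no state surviving in $\mathcal{S}$ has an uncontrollable transition of $\hat{G}$ leading to a removed state, so the control pattern induced by $\mathcal{S}$'s transition function never disables an uncontrollable event and $\mathcal{S}/G$ is well defined. Next, opacity: at termination no accessible state $\frac{q}{A}$ of $\mathcal{S}$ (which by the invariant is $\mathcal{S}/G\ ||\ H^{(\mathrm{final})}$) has $A\subseteq Q_m$. Given any $s\in L(\mathcal{S}/G)$ with $\delta(q_0,s)\in Q_m$, Lemma~\ref{lem:stateInGH} and its corollary tell us the state $\frac{q}{A}$ it reaches has $q=\delta(q_0,s)\in Q_m$, $A=\delta_{H^{(\mathrm{final})}}(A_0,P_{\Sigma_a}(s))$, and that $A$ is precisely the set of states reached by strings of $L(\mathcal{S}/G)$ sharing $s$'s $\Sigma_a$-projection; since $\frac{q}{A}$ is unmarked, $A\not\subseteq Q_m$, so some such string reaches a state in $Q_{NS}$ with the same projection, which is exactly Definition~\ref{def:currentStateOpacity}. (If the loop drives $M$ to the null automaton, the maximal-permissiveness argument below—applied to the empty controlled behaviour—shows no opacity-enforcing supervisor exists, consistent with the algorithm's stated output.)

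For maximal permissiveness I would first establish that opacity-enforcing controlled behaviours are closed under union, so that a supremal one, $L^{\uparrow}\subseteq L(\hat{G})$, exists: the union of prefix-closed (resp.\ controllable) languages is prefix-closed (resp.\ controllable), and if a string $s$ to a secret state is confused under $\mathcal{S}_1/G$ with a non-secret string $t\in L(\mathcal{S}_1/G)$, then $t$ is still present after the union and still satisfies $P_{\Sigma_a}(s)=P_{\Sigma_a}(t)$, so opacity survives (formally, Theorem~\ref{lem:realResult} guarantees that enlarging the controlled behaviour only enlarges the adversary's estimates). Correctness already gives $L(\mathcal{S}/G)\subseteq L^{\uparrow}$, so it remains to prove $L^{\uparrow}\subseteq L(G^{(i)})$ for all $i$ by induction. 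In the step, a string removed at iteration $i{+}1$ either reaches a marked state $\frac{q}{A}$ of $G^{(i)}\ ||\ H^{(i)}$ (so $q\in Q_m$ and $A\subseteq Q_m$) or reaches a state from which such a state is forced along uncontrollable transitions. In the first case, writing $G^{\uparrow}$ for the subautomaton of $\hat{G}$ generating $L^{\uparrow}$ and $H^{\uparrow}=T_{\Sigma_a}(G^{\uparrow})$, the induction hypothesis and Theorem~\ref{lem:realResult} give $H^{\uparrow}\ \tilde{\sqsubseteq}\ H^{(i)}$, so the estimate of $L^{\uparrow}$ after $P_{\Sigma_a}(s)$ is a subset of $A\subseteq Q_m$; keeping $s$ in $L^{\uparrow}$ would then violate opacity, hence $s\notin L^{\uparrow}$. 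In the second case, controllability of $L^{\uparrow}$ propagates ``$\notin L^{\uparrow}$'' backward along uncontrollable transitions, matching exactly the backward closure \texttt{REFINE} computes. Thus $L^{\uparrow}\subseteq L(\mathcal{S}/G)$, and with the reverse inclusion, $L(\mathcal{S}/G)=L^{\uparrow}$.

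The main obstacle is the maximal-permissiveness half. Two points need care: (i) showing the supremal opacity-enforcing controlled behaviour exists at all—this is where the supervisor-aware adversary bites, since the adversary's estimates move with the supervisor, and it is precisely the subobserver monotonicity of Theorem~\ref{lem:realResult} that makes the union argument go through; and (ii) the induction that \texttt{SYNTHESIZE} never deletes a string of $L^{\uparrow}$, which must correctly interleave the refinement and re-marking steps—new marked states appear only because an estimate shrank—and which must confirm that when the loop halts, $H^{(\mathrm{final})}$ is genuinely $T_{\Sigma_a}(\mathcal{S}/G)$ (a fixed point), so that the opacity check performed on $M$ is the check against the real adversary. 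The remaining pieces—admissibility, the Lemma~\ref{lem:stateInGH} translation between states of $M$ and strings, and termination—are routine.
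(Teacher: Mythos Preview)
Your proposal is correct and, in fact, considerably more thorough than the paper's own proof. The paper argues both halves by brief contradiction: for correctness it simply observes that if $\mathcal{S}=M$ still contained a marked state the \texttt{while} loop would not have terminated; for maximal permissiveness it asserts that the only transitions $\mathcal{S}$ disables are those leading (directly or via an uncontrollable path) to a marked state, so any strictly more permissive $\mathcal{S}'$ must enable such a transition and hence fail to enforce opacity. There is no explicit loop invariant, no appeal to a supremal language, and no induction over iterations.

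Your route differs in that you make the iteration structure do real work: you maintain the invariant $M=G^{(i)}\,\|\,H^{(i)}$ via Theorem~\ref{the:correctnessOfRefine}, you establish existence of the supremal opaque controlled behaviour $L^{\uparrow}$ via closure under union (using Theorem~\ref{lem:realResult} for the monotonicity of estimates), and you prove $L^{\uparrow}\subseteq L(G^{(i)})$ by induction, again invoking the subobserver relation to transport the inclusion down to the adversary's estimates. This buys you a genuinely rigorous treatment of the point the paper's sketch glosses over---namely that ``marked'' is a moving target across iterations, since refinement can shrink an estimate $A$ until $A\subseteq Q_m$---and it makes explicit why Theorem~\ref{lem:realResult} is the linchpin of the whole construction. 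The paper's version is shorter and conveys the intuition, but your decomposition is the one that would survive scrutiny of the supervisor-aware, multi-iteration case.
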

\begin{proof}
We begin by proving by contradiction that the supervisor $\mathcal{S}$ correctly enforces opacity for $G$. We assume that $\mathcal{S}$ does not enforce opacity for $G$. In this case, at least one of the states in $\mathcal{S} = M$ must be marked, reflecting that all of the states in the adversary's estimate of the plant are secret states in $G$. If this was the case, however, then the while loop at Lines~\ref{alg:markedStatesBegin} to \ref{alg:markedStatesEnd} would have executed again and removed the last controllable transition along each path to this state, rendering it inaccessible. Therefore, when the \texttt{SYNTHESIZE} algorithm terminates, $\mathcal{S} = M$ has no marked states and it therefore correctly enforces opacity for $G$.

Next, we prove by contradiction that the supervisor $\mathcal{S}$ is maximally-permissive while enforcing opacity for $G$. Assume that there exists a supervisor $\mathcal{S}'$ that correctly enforces opacity for $G$ while being more permissive than supervisor $\mathcal{S}$. This implies that there is a transition in $G\ ||\ H$, $\delta^\star$, that $\mathcal{S}'$ enables and that $\mathcal{S}$ disables. But the only transitions that $\mathcal{S}$ disables in $G\ ||\ H$ are those that lead directly to a marked state or to a state which leads uncontrollably to a marked state, so $\delta^*$ must either lead directly or uncontrollably to a marked state. Therefore, $\mathcal{S}'$ does not correctly enforce opacity for $G$.
\qed
\end{proof}

\subsection{Time complexity analysis of the \texttt{SYNTHESIZE} algorithm}
Although the \texttt{SYNTHESIZE} algorithm presents a straightforward way of seeing how we can iteratively generate a list of transitions, $\Delta$, and then refine the parallel composition of plant and adversary's view, its asymptotic time complexity is in a higher complexity class than other methods in the literature. We therefore present Algorithm~\ref{alg:proposedTimeEfficient}, which embeds the \texttt{REFINE} algorithm into the \texttt{SYNTHESIZE} algorithm in a computationally efficient manner. We tie the asymptotic behaviour of Algorithm~\ref{alg:proposedTimeEfficient} to the input of interest, the automaton $G$ with $|Q|$ states and an alphabet of size $|\Sigma|$.
\begin{algorithm}[htb]
\KwData{Plant $G$; secret states $Q_m$ and non-secret states $Q \setminus Q_m$; alphabet visible to the adversary $\Sigma_a \subseteq \Sigma$ and alphabet controllable by supervisor $\Sigma_c \subseteq \Sigma$.}
\KwResult{Supervisor $\mathcal{S}$ such that $\mathcal{S}/G$ is current-state opaque with respect to $Q_m$ and $\Sigma_a$.}
\tcc{Initialize the algorithm.}
$H = T_{\Sigma_a}(G)$ \label{alg:startInitialize}\;
$M =\ G\ ||\ H$\;
Produce $\mathcal{H}$, a hash table with adversary estimates as keys and states of $M$ as values \label{alg:endInitialize}\;
\ForEach{marked state in $M$, $m = \frac{q}{A}$ \label{alg:foreachMarkedStateBegin}} 
{
\ForEach{transition leading into $m$, $t = \langle m', \sigma, m \rangle_M$ \label{alg:foreachTransitionsBegin}}
{
\uIf(the transition $t$ is uncontrollable){$\sigma \in \Sigma_{uc}$}
{Mark $m'$ in $M$\;}
}\label{alg:foreachTransitionsEnd}
Remove the state $m$ from $M$\; \label{alg:markedStateRemoved}
Go to the key $A$ in $\mathcal{H}$ and replace it with $A \setminus \{q\}$\; \label{alg:updateHashTableKey}
\If{$A \setminus \{q\} \subseteq Q_m$ \label{alg:markNewStatesBegin}} 
{Mark all states in the hash table slot indexed by $\mathcal{H}(A \setminus \{q\})$ \label{alg:markStatesInSlot}\;
}\label{alg:markNewStatesEnd}
} \label{alg:foreachMarkedStateEnd}
$\mathcal{S} = M$\;
\caption{An computationally efficient implementation of the \texttt{SYNTHESIZE} algorithm}
\label{alg:proposedTimeEfficient}
\end{algorithm}

Initializing the algorithm, Lines~\ref{alg:startInitialize}-\ref{alg:endInitialize}, requires constructing the adversary's view, $H = T_{\Sigma_a}(G)$, and the parallel composition of $G\ ||\ H$. The time complexity of this process is established to be $\mathcal{O}(|Q| \times 2^{|Q|})$. The production of the hash table $\mathcal{H}$ requires an operation for each $|Q|$ state in $M$ as well and in practice it can be constructed alongside the parallel composition.

The outer loop iterates through the marked states in $M$, Lines~\ref{alg:foreachMarkedStateBegin} to \ref{alg:foreachMarkedStateEnd}. This loop will run a variable number of times, but we will be able to assess the loop's complexity by establishing upper bounds for each of the lines inside this loop. First, the loop responsible for inspecting transitions in $M$, Lines~\ref{alg:foreachTransitionsBegin}-\ref{alg:foreachTransitionsEnd}, visits each transition in $M$ at most once and at most one constant time operation is performed. This is guaranteed because each visited transition is visited because its destination state is marked. Because that destination state will be removed at Line~\ref{alg:markedStateRemoved} of the current iteration, the transition in question will not be visited again. Lines~\ref{alg:foreachTransitionsBegin}-\ref{alg:foreachTransitionsEnd} therefore have an asymptotic time complexity over the algorithm's whole run time of $\mathcal{O}(|Q| \times 2^{|Q|} \times |\Sigma|)$. Second, the lines responsible for processing marked states in $M$, Lines~\ref{alg:markedStateRemoved}-\ref{alg:markNewStatesEnd} consists of up to four operations. Removing the state $m$ from $M$ has a constant time complexity; because each state in $M$ can be removed at most once this line has an asymptotic time complexity over the algorithm's whole run time of $\mathcal{O}(|Q| \times 2^{|Q|})$. Updating the hash table key similarly has a constant time complexity and can only occur once per state in $M$. This again leads to an asymptotic time complexity over the algorithm's whole run time of $\mathcal{O}(|Q| \times 2^{|Q|})$. Finally, checking if the new estimate consists of solely secret states (Line~\ref{alg:markNewStatesBegin}) takes at most $|Q_m|$ checks and then marking states in the hash table slot (Line~\ref{alg:markStatesInSlot}) takes constant time by state. Since every state can only be marked once over the algorithm's whole run time, this again leads to an asymptotic time complexity of $\mathcal{O}(|Q| \times 2^{|Q|})$.

Taken together, the asymptotic time complexity of Algorithm~\ref{alg:proposedTimeEfficient} is $\mathcal{O}(|Q| \times 2^{|Q|} \times |\Sigma|)$. Since in many problems the size of the alphabet does not grow and is much smaller than the size of the state space, we can consider $|\Sigma|$ as a constant and reduce this time complexity to $\mathcal{O}(|Q| \times 2^{|Q|})$. Our result is in keeping with the exponential time required for computing an observer automaton and we note that there is no algorithm for verifying current-state opacity for a system $G$ whose time complexity is polynomial in the number of states in $G$~\cite{Wu2013}.

\subsection{Discussion}
We have demonstrated how our method for refining parallel compositions of plant and observer can be used to synthesize an opacity-enforcing supervisor. The \texttt{SYNTHESIZE} algorithm implements this procedure in a direct manner and Algorithm~\ref{alg:proposedTimeEfficient} trades a small increase in memory usage to reduce the asymptotic time complexity to $\mathcal{O}(|Q| \times 2^{|Q|})$ where $|Q|$ is the number of states in the original plant.

We showed two examples to demonstrate how our algorithm works. The first example highlighted the interplay between enacting supervisory control and enforcing opacity when faced with a supervisor-aware adversary: multiple refinements of the automaton $G\ ||\ H$ may be required before an output is produced that can be used as an opacity-enforcing supervisor. The second example showed how our approach inherently addresses splits states when a secret state can be safely visited only a finite number of times without revealing the system secret. Recalling the two methods in the literature that are most similar to ours, we note that Algorithm~\ref{alg:proposedTimeEfficient} offers improvements on both approaches. Although each method has the same asymptotic time complexity,~\cite{Tong2018a}, our method produces only a single structure at each step: the parallel composition of plant and adversary estimate.

The method of condensed state estimates, by contrast, produces both a partial map and an automaton at every step~\cite{Dubreil2010}. Additionally this method requires ``loosing paths'' to be assessed in order to verify the opacity property because condensed state estimates account only for states that the plant might have reached if the last observed event is the last event that occurred in the plant. Our method does away with the need to reason about these ``loosing paths,'' which produces two advantages. First, opacity is immediately verified by the lack of marked states in the output which permits an opacity-enforcing supervisor to be easily verified by an independent party. Second, the output's state labels are semantically meaningful: if a trace occurs in the system then this trace leads to a state in the parallel composition whose label is the current state of the system and the states that the observer believes the system could be in. This information can be used by an engineer or a computer program to make decisions about the system.

The augmented I-observer method, solves a more general problem by allowing $\Sigma_s$ and $\Sigma_a$ to be incomparable~\cite{Tong2018a}. Balanced against this, it focuses on the case of a single iteration to enforce opacity against a non-supervisor-aware adversary. Because this method calculates the supremal G-opaque sublanguage of the plant's language and a supervisor to enforce this language as a specification,~\cite{Tong2018a}, this results in unnecessary calculations when a a supervisor-aware adversary makes multiple iterations necessary. Our method makes the realistic assumption that $\Sigma_a \subseteq \Sigma_s$ and addresses the case of a supervisor-aware adversary with only a single structure computed at each step.

\section{Conclusion}
The opacity control problem has been of interest in the DES literature for many years and has been addressed a number of times. It is a complex problem whose straightforward solution is computationally expensive. Compared with our method, the solutions to the opacity control problem in the literature variously require additional computations to deal with the problem in its general form or use structures that do not intuitively align with the opacity control problem.

This paper introduced and described the subobserver relationship. This relationship is analogous to the subautomaton relationship and links the observers of a plant whose structure evolves. We demonstrated the usefulness of the subobserver relationship by using it as the basis of an algorithm to solve the opacity control problem and we believe that this relationship has broader application in DES research including online control, dynamic discrete-event systems, and decentralized control.

Time complexity analysis shows that our algorithms are computationally efficient, with Algorithm~\ref{alg:proposedTimeEfficient} matching the asymptotic time complexity of previous methods in the literature. We also demonstrated that our algorithms intuitively solve the opacity control problem under reasonable assumptions. Future research applying these ideas may include relaxing the requirement that $\Sigma_c \subseteq \Sigma_s$ for the opacity control problem and applying the subobserver relationship to the iterative refinements necessary when a single supervisor enacts online control or several agents enact decentralized control over a plant.

\section*{Conflicts of Interest}
The authors declare that they have no conflict of interest.

\begin{acknowledgements}
The authors acknowledge that Queen's University is situated on traditional Anishinaabe and Haudenosaunee Territory.

The authors would like to thank Richard Ean and Bryony Schonewille of the Queen's Discrete-Event Systems Lab as well as the reviewers for their thoughtful and constructive comments. This research was supported by the Natural Sciences and Engineering Research Council of Canada as well as the Queen's University Faculty of Engineering and Applied Science. RHM held a Walter C. Sumner Memorial Fellowship during the conduct of this work.

All figures with automata were produced using the Integrated Discrete-Event Systems Tool,~\cite{Rudie2006}, which is freely available online at \texttt{https://github.com/krudie/IDES}.
\end{acknowledgements}

\appendix
\section{Omitted Proofs}
\begin{proof}[Proof of Lemma~\ref{lem:stateInGH}]
We begin by noting that $L(G\ ||\ H) = L(G)$ and that because $\Sigma_o \subseteq \Sigma$, the alphabet for $G\ ||\ H$ is $\Sigma$ as well.

$\implies$. If a state $\frac{q}{A}$ is accessible in $G\ ||\ H$, then there exists a string $s \in L(G\ ||\ H)$ such that $\delta_{G\ ||\ H}(\frac{q_0}{A_0},s) = \frac{q}{A}$. By Definition~8, this implies that that $\delta_{G}(q_0,s) = q$ and that $\delta_{H}(A_0,P_{\Sigma_o}(s)) = A$. Since $\delta_{H}(A_0,P_{\Sigma_o}(s)) = A$, we know that every string that shares a projection with $s$ leads to a state in $A$ and that every state in $A$ is reached by a string that shares a projection with $s$. So we can conclude that $$(\exists s \in L(G))(\delta_G(q_0,s) = q)(\{q' \in Q\ |\ (\exists s' \in L(G)\ |\ P_{\Sigma_o}(s) = P_{\Sigma_o}(s'))(\delta_G(q_0,s')=q')\} = A).$$

$\impliedby$. We are given that $$(\exists s \in L(G))(\delta_G(q_0,s) = q)(\{q' \in Q\ |\ (\exists s' \in L(G)\ |\ P_{\Sigma_o}(s) = P_{\Sigma_o}(s'))(\delta_G(q_0,s')=q')\} = A).$$ We therefore know that $\delta_{G}(q_0,s) = q$ and that $\delta_{H}(A_0,P_{\Sigma_o}(s)) = A$. By Definition~8, this implies that $\delta_{G\ ||\ H}(\frac{q_0}{A_0},s) = \frac{q}{A}$ and therefore $\frac{q}{A}$ is accessible in $G\ ||\ H$.
\qed
\end{proof}

\begin{proof}[Proof of Lemma~\ref{lem:produceGPrimeSubautomaton}]
According to the definition, $G'$ is a subautomaton of $G$, denoted by $G'\ \sqsubseteq\ G$, if $\delta_{G'}(q'_0,s) = \delta_G(q_{0},s)\quad \forall\ s \in L(G').$

Denote the first elements of states remaining in $G\ ||\ H$ as $G'$. Then we know that $$(\forall\ s \in L(G'))(\delta_{G'}(q_0,s) = q \implies \delta_{G}(q_0,s) = q)$$ since any first element of a state that survives in $G\ ||\ H$ was originally in $G$. This means that any string that occurs in $G'$ leads to the same state that it did in $G$ and therefore that $\delta_{G'}(q'_0,s) = \delta_G(q_{0},s)\quad \forall\ s \in L(G').$

We conclude that $G'$ is a subautomaton of $G$.
\qed
\end{proof}

\bibliographystyle{spmpsci}
\bibliography{Subobserver.bib}

\begin{thebibliography}{10}
\providecommand{\url}[1]{{#1}}
\providecommand{\urlprefix}{URL }
\expandafter\ifx\csname urlstyle\endcsname\relax
  \providecommand{\doi}[1]{DOI~\discretionary{}{}{}#1}\else
  \providecommand{\doi}{DOI~\discretionary{}{}{}\begingroup
  \urlstyle{rm}\Url}\fi

\bibitem{Badouel2007}
Badouel, E., Bednarczyk, M., Borzyszkowski, A., Caillaud, B., Darondeau, P.:
  {Concurrent Secrets}.
\newblock Discrete Event Dynamic Systems \textbf{17}(4), 425--446 (2007).
\newblock \doi{10.1007/s10626-007-0020-5}

\bibitem{Ben-Kalefa2011}
Ben-Kalefa, M., Lin, F.: {Supervisory control for opacity of discrete event
  systems}.
\newblock In: 49th Annual Allerton Conference on Communication, Control, and
  Computing, pp. 1113--1119. IEEE (2011).
\newblock \doi{10.1109/Allerton.2011.6120292}

\bibitem{Bryans2008}
Bryans, J.W., Koutny, M., Mazar{\'{e}}, L., Ryan, P.Y.: {Opacity generalised to
  transition systems}.
\newblock International Journal of Information Security \textbf{7}(6), 421--435
  (2008).
\newblock \doi{10.1007/s10207-008-0058-x}

\bibitem{Cassandras2008}
Cassandras, C.G., Lafortune, S.: {Introduction to Discrete Event Systems},
  second edn.
\newblock Springer US, Boston, MA (2008).
\newblock \doi{10.1007/978-0-387-68612-7}

\bibitem{Cho1989a}
Cho, H., Marcus, S.I.: {Supremal and maximal sublanguages arising in supervisor
  synthesis problems with partial observations}.
\newblock Mathematical Systems Theory \textbf{22}(1), 177--211 (1989).
\newblock \doi{10.1007/BF02088297}

\bibitem{Chung1992}
Chung, S.L., Lafortune, S., Lin, F.: {Limited lookahead policies in supervisory
  control of discrete event systems}.
\newblock IEEE Transactions on Automatic Control \textbf{37}(12), 1921--1935
  (1992).
\newblock \doi{10.1109/9.182478}

\bibitem{Dubreil2010}
Dubreil, J., Darondeau, P., Marchand, H.: {Supervisory control for opacity}.
\newblock IEEE Transactions on Automatic Control \textbf{55}(5), 1089--1100
  (2010).
\newblock \doi{10.1109/TAC.2010.2042008}

\bibitem{Grigorov2006}
Grigorov, L., Rudie, K.: {Near-Optimal Online Control of Dynamic Discrete-Event
  Systems}.
\newblock Discrete Event Dynamic Systems \textbf{16}(4), 419--449 (2006).
\newblock \doi{10.1007/s10626-006-0020-x}

\bibitem{Hopcroft1979}
Hopcroft, J.E., Ullman, J.D.: {Introduction to Automata Theory, Languages and
  Computation}  (1979)

\bibitem{Jacob2016}
Jacob, R., Lesage, J.J., Faure, J.M.: {Overview of discrete event systems
  opacity: Models, validation, and quantification}.
\newblock Annual Reviews in Control \textbf{41}, 135--146 (2016).
\newblock \doi{10.1016/j.arcontrol.2016.04.015}

\bibitem{Jiraskova2012}
Jir{\'{a}}skov{\'{a}}, G., Masopust, T.: {On properties and state complexity of
  deterministic state-partition automata}.
\newblock Lecture Notes in Computer Science \textbf{7604}, 164--178 (2012).
\newblock \doi{10.1007/978-3-642-33475-7\_12}

\bibitem{Lin2011}
Lin, F.: {Opacity of discrete event systems and its applications}.
\newblock Automatica \textbf{47}(3), 496--503 (2011).
\newblock \doi{10.1016/j.automatica.2011.01.002}

\bibitem{Lin1988a}
Lin, F., Wonham, W.M.: {Decentralized supervisory control of discrete-event
  systems}.
\newblock Information Sciences \textbf{44}(3), 199--224 (1988).
\newblock \doi{10.1016/0020-0255(88)90002-3}

\bibitem{Lin1988}
Lin, F., Wonham, W.M.: {On observability of discrete-event systems}.
\newblock Information Sciences \textbf{44}(3), 173--198 (1988).
\newblock \doi{10.1016/0020-0255(88)90001-1}

\bibitem{Mazare2004}
Mazar{\'{e}}, L.: {Using Unification for Opacity Properties}.
\newblock In: Proceedings of the Workshop on Issues in the Theory of Security
  (WITS'04), vol.~7, pp. 165--176 (2004)

\bibitem{Ramadge1987}
Ramadge, P.J., Wonham, W.M.: {Supervisory Control of a Class of Discrete Event
  Processes}.
\newblock SIAM Journal on Control and Optimization \textbf{25}(1), 206--230
  (1987).
\newblock \doi{10.1137/0325013}

\bibitem{Rosen2019}
Rosen, K.H.: {Discrete Mathematics and its Applications}, eighth (in edn.
\newblock McGraw-Hill Education, New York, USA (2019)

\bibitem{Rudie2006}
Rudie, K.: {The Integrated Discrete-Event Systems Tool}.
\newblock 8th International Workshop on Discrete Event Systems pp. 394--395
  (2006).
\newblock \doi{10.1109/WODES.2006.382403}

\bibitem{Saboori2008}
Saboori, A., Hadjicostis, C.N.: {Verification of initial-state opacity in
  security applications of DES}.
\newblock 9th International Workshop on Discrete Event Systems (WODES 08) pp.
  328--333 (2008).
\newblock \doi{10.1109/WODES.2008.4605967}

\bibitem{Saboori2011}
Saboori, A., Hadjicostis, C.N.: {Verification of K-Step Opacity and Analysis of
  Its Complexity}.
\newblock IEEE Transactions on Automation Science and Engineering
  \textbf{8}(3), 549--559 (2011).
\newblock \doi{10.1109/TASE.2011.2106775}

\bibitem{Saboori2012}
Saboori, A., Hadjicostis, C.N.: {Opacity-enforcing supervisory strategies via
  state estimator constructions}.
\newblock IEEE Transactions on Automatic Control \textbf{57}(5), 1155--1165
  (2012).
\newblock \doi{10.1109/TAC.2011.2170453}

\bibitem{Takai2008}
Takai, S., Oka, Y.: {A Formula for the Supremal Controllable and Opaque
  Sublanguage Arising in Supervisory Control}.
\newblock SICE Journal of Control, Measurement, and System Integration
  \textbf{1}(4), 307--311 (2008).
\newblock \doi{10.9746/jcmsi.1.307}

\bibitem{Tong2018a}
Tong, Y., Li, Z., Seatzu, C., Giua, A.: {Current-state opacity enforcement in
  discrete event systems under incomparable observations}.
\newblock Discrete Event Dynamic Systems: Theory and Applications
  \textbf{28}(2), 161--182 (2018).
\newblock \doi{10.1007/s10626-017-0264-7}

\bibitem{Wonham2019}
Wonham, W.M., Cai, K.: {Supervisory Control of Discrete-Event Systems}.
\newblock Communications and Control Engineering. Springer International
  Publishing, Cham (2019).
\newblock \doi{10.1007/978-3-319-77452-7}

\bibitem{Wu2013}
Wu, Y.C., Lafortune, S.: {Comparative analysis of related notions of opacity in
  centralized and coordinated architectures}.
\newblock Discrete Event Dynamic Systems: Theory and Applications
  \textbf{23}(3), 307--339 (2013).
\newblock \doi{10.1007/s10626-012-0145-z}

\bibitem{Yin2015}
Yin, X., Lafortune, S.: {A new approach for synthesizing opacity-enforcing
  supervisors for partially-observed discrete-event systems}.
\newblock In: 2015 American Control Conference (ACC), pp. 377--383. IEEE,
  Chicago, USA (2015).
\newblock \doi{10.1109/ACC.2015.7170765}

\bibitem{Yin2016a}
Yin, X., Lafortune, S.: {A Uniform Approach for Synthesizing Property-Enforcing
  Supervisors for Partially-Observed Discrete-Event Systems}.
\newblock IEEE Transactions on Automatic Control \textbf{61}(8), 2140--2154
  (2016).
\newblock \doi{10.1109/TAC.2015.2484359}

\end{thebibliography}
\end{document}